\newtheorem{theorem}{Theorem}[section]
\newtheorem{definition}[theorem]{Definition}
\newtheorem{lemma}[theorem]{Lemma}
\newtheorem{proposition}[theorem]{Proposition}
\newcommand{\beq}{\begin{equation}}
\newcommand{\eeq}{\end{equation}}
\newcommand{\beqa}{\begin{eqnarray}}
\newcommand{\eeqa}{\end{eqnarray}}
\def\idty{\mathbbm{1}}
\def\tr{\mathrm{tr}}
\def \choi {\xi}
\def \C {{\mathbb C}}
\def \R {{\mathbb R}}
\def \S {{\mathcal S}}
\def \Z {{\mathbb Z}}
\def \H {{\mathcal H}}
\def \M {{\mathbf M}}
\def \N {\mathbb N}
\def \id {\mathrm {id}}
\newcommand\ketbra[2]{\ensuremath{%
    \vert{#1}\mkern1.2mu\rangle\langle\mkern1.2mu{#2}\vert}}
\newcommand\ket[1]{\ensuremath{%
    \vert{#1}\mkern1.2mu\rangle}}
\newcommand\bra[1]{\ensuremath{%
    \langle{#1}\mkern1.2mu\vert}}
\newcommand\hilbertH{\ensuremath{\mathcal{H}}}
\newcommand\hilberteins{\ensuremath{\mathbbm{1}}}
\begin{document}

\markboth{Andre Ahlbrecht, Florian Richter, and Reinhard F. Werner}
{Finite Roots of the Completely Depolarizing Channel}

\title{How long can it take for a quantum channel to forget everything?}

\author{Andre Ahlbrecht}
\address{Institute for Theoretical Physics, Leibniz Universit\"at Hannover,
Appelstra\ss e 2\\
30167 Hannover,
Germany}
\email{andre.ahlbrecht@itp.uni-hannover.de}

\author{Florian Richter}
\email{frichter@itp.uni-hannover.de}

\author{Reinhard F. Werner}
\email{reinhard.werner@itp.uni-hannover.de}

\maketitle

\begin{abstract}
We investigate quantum channels, which after a finite number $k$ of repeated applications erase all input information, i.e., channels whose $k$-th power (but no smaller power) is a completely depolarizing channel. We show that on a system with Hilbert space dimension $d$, the order is bounded by $k\leq d^2-1$, and give an explicit construction scheme for such channels. We also consider strictly forgetful memory channels, i.e., channels with an additional input and output in every step, which after exactly $k$ steps retain no information about the initial memory state. We establish an explicit representation for such channels showing that the same bound applies for the memory depth $k$ in terms of the memory dimension $d$.
\end{abstract}

\section{Introduction}	

Quantum channels are the mathematical description for the most general quantum information processing operations. In this paper we consider the question how quantum information can be erased in an iterated process. In the simplest case, all information is lost after a single step of this process, i.e. the quantum channel completely depolarizes its initial state.
Of course, the more interesting case is when the channel representing the one-step process acts non-trivially on the input system, but after a finite number of iterations leaves no information about the input system's initialization.
We refer to such a channel as a root of a completely depolarizing channel (see figure~\ref{fig:root}). One of the main objectives in this article is to show how long it can possibly take until such an iteration is completely depolarizing. An upper bound in terms of the system's dimension can be derived easily from the Jordan normal form of the channel, but since the process needs to represent a physical transformation, which is expressed by complete positivity of the corresponding map, it is not clear a priori whether this bound is attained by some channel. In order to show that this is indeed a tight bound, we develop an explicit construction scheme for maximal roots of completely depolarizing channels.

One motivation to look at this problem stems from quantum memory channels \cite{memorychannel}. These are channels which account for correlations between successive uses of the channel by introducing an additional system, referred to as the memory. A central question in this context is whether the influence of a fixed input on the memory dies out in time, i.e. whether the channel is {\em forgetful} or not \cite{memorychannel,rybar}. Moreover, if the impact of the memory input vanishes within a finite number of steps the channel is referred to as {\em strictly forgetful}. We will demonstrate a connection between the concept of roots of completely depolarizing channels and strictly forgetful memory channels. The idea is to consider the transformation of the memory as a function of the state of the external input system. When the memory channel is strictly forgetful, this must be a root of a completely depolarizing channel for all system states. The converse, however, is not true: There are memory channels which give a root of a completely depolarizing channel for all fixed inputs, but are not strictly forgetful for general sequences of possibly entangled input states. Nevertheless, by adapting our method to the setting of strictly forgetful memory channels it is possible to create a technique which yields all strictly forgetful memory channels.

\begin{figure}[htbp]
	\centering
		\includegraphics{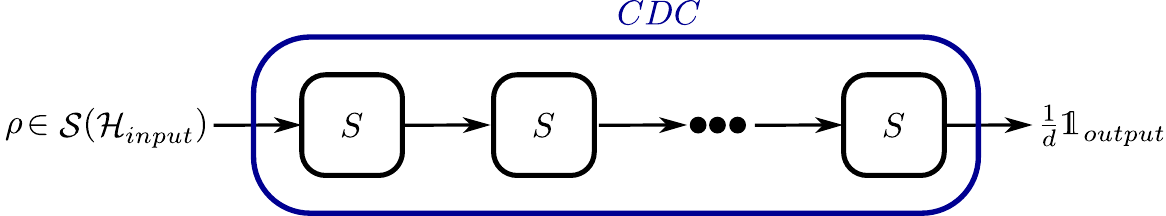}
	\caption{The channel $S$ is a finite root of the completely depolarizing channel (CDC) since a finite number of iterations maps an arbitrary input to the maximally mixed one ($d$ denotes the dimension of the quantum system).}
	\label{fig:root}
\end{figure}

As a second application of our theory, we discuss the generation of finitely correlated spin-chain states \cite{fcs} by a root of completely depolarizing channels. If the channel is an $k^{\rm th}$ root, one obtains so-called $k$-dependent states \cite{Petz,Matus}, which are defined by the property that the output observables separated by more than $k$ sites are independent.

Our paper is organized as follows. We first set up some notation and background on quantum channels on finite dimensional systems. In Section~\ref{sec:roots} we derive the general upper bound, and then describe the construction of maximal roots. For qubits we give an exhaustive construction, and focus for general systems on the question how the Jordan structure of a maximal root can be realized by completely positive maps. In Section~\ref{sec:fcs} we show how to obtain $k$-dependent states, and in Section~\ref{sec:memch} we discuss the connection to memory channels. 

\section{Quantum channels on finite dimensional systems}
The purpose of this section is to introduce notation and to give some necessary background on the mathematical aspects of quantum channels. For a detailed introduction to this topic we refer the reader to Paulsen's book \cite{paulsen}.

Throughout this paper we deal exclusively with quantum systems which can be described by a finite dimensional Hilbert space $\hilbertH=\C^d$. By $\M_d$ we denote the set of linear operators on $\C^d$ and the physical states of the system are represented by the set of density operators $\mathcal{S}(\C^d):=\{\rho\in\M_d,\ \rho\geq 0, \ \text{tr}(\rho)=1\}$. Possible measurements on the system are associated with the set of hermitian operators $\mathcal{M}(\C^d):=\{A\in\M_d,\ A^*=A \}$, where $A^*$ denotes the adjoint of $A\in\M_d$. A quantum channel can be defined in two different ways, we can either regard it as a transformation of the physical states or as a transformation of the measurements. The first point of view, also known as the Schrödinger picture, corresponds to a linear mapping $T^*$ from the states on an input system $\hilbertH_{in}$ to states of an output system $\hilbertH_{out}$, that is,
\begin{equation}
T^*:\mathcal{S}(\hilbertH_{in})\mapsto \mathcal{S}(\hilbertH_{out})\, .
\end{equation}
In the Heisenberg picture, which is precisely the second point of view, the quantum channel is represented by a linear map $T$ from the measurements on $\hilbertH_{out}$ to measurements on $\hilbertH_{in}$, i.e.,
\begin{equation}
T:\mathcal{M}(\hilbertH_{out})\mapsto \mathcal{M}(\hilbertH_{in})\, .
\end{equation}
The maps $T$ and $T^*$ are equivalent representations of the same physical transformation iff all expectation values for measurements $A$ performed on states $\rho$ coincide after application of $T$ respectively $T^*$. Hence, the representations in Heisenberg and Schrödinger picture are connected by the duality relation
\begin{equation}
\tr (T^*(\rho)A)=\tr(\rho\ T(A))\,.
\end{equation}
By linearity both maps extend to the whole space $\M_{d_{in}}$ respectively $\M_{d_{out}}$. In order to represent physical transformations, the maps $T^*$ and $T$ have to satisfy certain properties. Both have to be completely positive, that is, the extended maps $T^*\otimes \id_n$ and $T\otimes \id_n$, where $\id_n$ denotes the identity map on $n$-dimensional matrices, preserve positivity of operators. Additionally, it is often assumed that the quantum channel always generates an output when it is fed with a state of the input system. Mathematically, this is expressed by the assumption that $T^*$ is trace-preserving and $T$ is unital. Note that one property is a consequence of the other and the duality relation.

A channel which maps every input state to the same output state $\sigma$ is called a completely depolarizing channel (CDC). We denote the Schrödinger picture representation of this channel by $T_\sigma^*$, its mathematical definition reads $T^*_{\sigma}(\rho)= \tr (\rho) \sigma $, with $\sigma\in\mathcal{S}(\mathcal{H}_{out})$. The duality relation yields
\begin{equation}
\tr (T_{\sigma}^*(\rho)A)=\tr (\rho )\tr (\sigma A)=\tr ({\rho  \, \tr (\sigma A)\cdot \idty}),
\end{equation}
thus, $T_{\sigma}(A)=\tr {\sigma A}\cdot \idty$ for every $A\in\M_{d_{out}}$ is the representation of a CDC in the Heisenberg picture. The particular case where $d_{in}=d_{out}=d$ and $\sigma=\frac{1}{d}\idty$ leads to the CDC which is defined by $T^*_{\idty/d}(\rho)=\tr (\rho)\frac{1}{d}\idty$ and $T_{\idty/d}(A)=\tr (A)\frac{1}{d}\idty $. We will refer to this channel as the \emph{bistochastic} CDC. \\

The mathematical theory of completely positive maps provides some important results leading us to different ways of specifying a quantum channel. Since we are interested in concatenable channels, we focus our attention in the following to channels with equal input and output system. The first statement is the famous theorem of Kraus \cite{kraus}, which proves that every completely positive map $T$ admits a decomposition of the form
\begin{equation}
T(X)=\sum_{\alpha}K_{\alpha}^* X K_{\alpha}\quad \text{with}\quad \sum_{\alpha}K_{\alpha}K_{\alpha}^*=\mathds{1}\, .
\end{equation}
We refer to $\{K_{\alpha}\}$ as \emph{Kraus operators} of the channel $T$. Note that this representation involves a unitary degree of freedom, i.e., the channels defined by $\{\tilde{K_i}\}$ and $\{K_i\}$ coincide, if there exists a unitary $U$ such that $K_i=\sum_jU_{ij}\tilde{K_j}$ holds. If we restrict to Kraus decompositions with minimal numbers of Kraus operators, this is actually the only freedom we have in choosing the $K_i$. In other words, two minimal Kraus representations $\{\tilde{K_i}\}$ and $\{{K_i}\}$ are always connected by a unitary $U$ and the formula $K_i=\sum_jU_{ij}\tilde{K_j}$. Of course, two Kraus decompositions of a quantum channel $T$ do not necessarily consist of the same number of Kraus operators. For example, consider the convex combination $T=\lambda T_1+(1-\lambda)T_2$ of two channels $T_1$ and $T_2$ with Kraus operators $\{K_{1,i}\}$ respectively $\{K_{2,i}\}$. Clearly, $T$ can be written as a Kraus decomposition with operators $\{\sqrt{\lambda}K_{1,i}\}\bigcup \{\sqrt{1-\lambda}K_{2,i}\}$ but in general there exists a Kraus decomposition with fewer operators. We define the minimal number of Kraus operators as the \emph{Kraus rank} of the channel $T$ and note that a Kraus decomposition is minimal iff the operators $K_i$ are linearly independent. We will see in the next section that a possible Kraus decomposition of the bistochastic CDC in the case of a qubit input and output system is $T_{\idty/2}(X)=\frac{1}{4}\sum_{i=0}^{4}\sigma_i X\sigma_i$ with Pauli operators $\sigma_i$. In fact, this result can trivially be extended to higher dimensions by replacing Pauli operators by Weyl operators. This implies that the Kraus rank of the bistochastic CDC is always $d^2$, where $d$ is the system's dimension.\\

Another characterization arises if we consider Choi's theorem \cite{choi}. The statement of the theorem is sometimes called the channel-state duality, since it predicates a map $T$ is completely positive iff its corresponding \emph{Choi operator} $\choi_T:=T\otimes \id(\ket{\Omega}\bra{\Omega})=\frac{1}{d} \sum_{i,j}T(\ket{i}\bra{j})\otimes\ket{i}\bra{j} $, with $\ket{\Omega}=\frac{1}{\sqrt{d}}\sum_{i}\ket{ii}$, is positive. In fact, the trace of $\choi_{T^*}$ is normalized, hence, $T$ is completely positive iff $\choi_{T^*}$ is a state, we will refer to $\choi_{T^*}$ as the \emph{Choi state} of $T$. Since the relation between a channel $T$ and its corresponding Choi state $\choi_{T^*}$ is invertible, any state fully determines a channel and vice versa. The Choi state and Choi operator of a CDC are then given by
\begin{equation}
\choi_{T_\sigma^*}=\sigma\otimes \frac{1}{d}\mathbbm{1}_d\quad \text{and}\quad \choi_{T_\sigma}= \frac{1}{d}\mathbbm{1}_d\otimes \sigma^T\, ,
\end{equation}
where $\sigma^T$ denotes the transpose of $\sigma$. Furthermore, the linearity of a channel $T$ allows for representation of $T$ by a matrix $D_{T}$. For that reason we equip the vector space $\M_d$ with the Hilbert-Schmidt scalar product $\langle A\vert B \rangle_{HS}:= \tr (A^*B)$ and define the representation matrix of a channel as ${D_{T}}_{i,j}:=\langle A^i |T( A_j) \rangle_{HS}$ with $\{A_1,...,A_{d^2}\}$ as operator basis and $\{A^1,...,A^{d^2}\}$ its dual basis defined via $\tr ({A^i}^*A_j)=\delta_{i,j}$. We point out that usually there is no way to determine the complete positivity of a map solely from its representation matrix without any knowledge of the operator basis. However, if we choose the matrix units $E_{ij}:= \ket{i}\bra{j}$ with $i,j \in \{1,...,d\}$ as a basis for the representation we find that the representation matrix of $T$ and its corresponding Choi operator $\choi_T$ are connected via $D_{T_{nm,kl}}=\bra{E_{nm}} T(E_{kl})\rangle=d\bra{n\otimes k}\choi_T\ket{m\otimes l}$. Thus, the representation matrix of a certian CDC is given in this basis by
\begin{equation}\label{eq:CDC_matrix}
\bra{E_{nm}} T_\sigma(E_{kl})\rangle=\tr (E_{mn}\tr (\sigma E_{kl})\mathbbm{1})=\frac{1}{d}\delta_{n,m}\bra{l}\sigma \ket{k}.
\end{equation}

The divisibility of quantum channels, that is, the property of a channel $T$ to be decomposable into two non-trivial channels $S_1$ and $S_2$ has been investigated in reference \cite{dividingquantum}. In the paper at hand, we specify this investigations for the divisibility of a completely depolarizing channels $T_{\sigma}$ into a self-concatenation of identical maps, i.e. whether there exists a channel $S$ and $k\in \N$ such that $S^k=T_{\sigma}$. Finally, we draw some connections of this problem to other fields in quantum information theory.

We close this section with a mathematical definition of a $k^{\rm th}$ order root of a quantum channel. According to figure \ref{fig:root} we define:
\begin{definition}[Root of a Channel] \label{def:root}
A \textbf{$k^{\rm th}$ root} of the channel $T:\M_d\mapsto \M_d$ is a channel $S:\M_d\mapsto \M_d$ with
\begin{equation}
S^k=T \ \ \text{and} \ S^r\neq T \ \text{for} \  r<k, \quad k,r\in\mathbb{N}.
\end{equation}
We refer to $k$ as the \textbf{order} of a root.
\end{definition}

\section{Roots of Completely Depolarizing Channels}\label{sec:roots}

This section is started with some general comments about the construction of roots of a CDC. Our aim is to get roots with maximal number of necessary self-concatenations, i.e. maximal order roots. It will turn out that this maximal order is always $d^2-1$, where $d$ denotes the dimension of the underlying Hilbert space $\H$. If we consider the bistochastic CDC it is always possible to construct a maximal order root of this channel. After characterizing all maximal roots of the bistochastic CDC for qubit systems we present a construction scheme leading to maximal roots of the bistochastic CDC in arbitrary dimensions.

\subsection{General upper bound}
\label{Sec:GenApp}

The aim of this section is to describe the general approach we take to construct roots of a CDC. In particular, we investigate what the highest possible order of a finite CDC-root in terms of the dimension of $\H$ can be. Comparing the three introduced representations it turns out that the matrix-representation of a channel is the most fruitful one to determine an upper bound for the maximal order of a root.
\begin{theorem}[Boundedness of the root order]
\label{thm:bound}
Let $S:\M_d\mapsto \M_d$ be a channel, which is a $k^{\rm th}$ root of a completely depolarizing channel $T_\sigma$. Then $k\leq d^2-1$.
\end{theorem}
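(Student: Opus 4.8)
The plan is to work with the matrix representation $D_S$ of the channel $S$ and exploit the fact that $S^k = T_\sigma$ forces $D_S^k = D_{T_\sigma}$. The key observation is that the Choi-state description (or equation~\eqref{eq:CDC_matrix}) shows $D_{T_\sigma}$ is a rank-one idempotent up to normalization: concretely, $T_\sigma$ is a projection onto a one-dimensional subspace composed with annihilation of everything else, so in the matrix-unit basis $D_{T_\sigma}$ has a single nonzero eigenvalue equal to $1$ (coming from the unital/trace-preserving direction $\idty$, which every channel fixes) and all remaining $d^2-1$ eigenvalues equal to $0$. Thus $D_{T_\sigma}$ is a rank-one matrix, and $D_S^k = D_{T_\sigma}$.

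First I would record that any unital channel fixes the identity, so $D_S$ always has the eigenvalue $1$ with eigenvector $\idty$; the corresponding part of the iteration never decays and matches the single nonzero eigenvalue of $D_{T_\sigma}$. It therefore suffices to analyze the action of $S$ on the $(d^2-1)$-dimensional complementary space of trace-zero (or Hilbert--Schmidt-orthogonal) operators, on which $D_S$ acts as some matrix $N$ satisfying $N^k = 0$, i.e.\ $N$ is nilpotent of index exactly $k$ (index exactly $k$, not less, precisely because $S^r \neq T_\sigma$ for $r<k$, so $N^{k-1} \neq 0$).

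The crux is then a standard linear-algebra fact about the Jordan normal form: a nilpotent operator $N$ on an $m$-dimensional space can have nilpotency index at most $m$, with equality iff $N$ consists of a single Jordan block. Here $m = d^2-1$, so the index $k$ of $N$ satisfies $k \le d^2-1$, which is exactly the claimed bound. I would phrase this via the strictly decreasing chain of kernels $\ker N \subsetneq \ker N^2 \subsetneq \cdots \subsetneq \ker N^k$, whose length is bounded by the dimension $d^2-1$ of the space, giving $k \le d^2-1$.

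The main obstacle, or at least the step requiring the most care, is the clean identification of the eigenvalue/Jordan structure of $D_{T_\sigma}$ and the rigorous splitting of $\M_d$ into the invariant one-dimensional span of $\idty$ and its $(d^2-1)$-dimensional complement that is preserved by every unital channel. Once that decomposition is justified—using unitality of $S$ (hence of $S^k$, consistent with $T_\sigma$ being unital in the bistochastic normalization) to guarantee $S$ leaves this complement invariant—the nilpotency claim $N^k=0$ with $N^{k-1}\neq 0$ follows immediately from $D_S^k = D_{T_\sigma}$, and the Jordan-block bound closes the argument. Notably this reasoning uses only the linear (matrix-representation) structure and not complete positivity, which is why the authors flag separately that attaining the bound by a genuine channel is the harder, nontrivial direction addressed later.
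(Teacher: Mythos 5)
Your overall strategy---split off the eigenvalue-$1$ direction, observe that what remains of $D_S$ is a matrix $N$ with $N^k=0$, and bound the nilpotency index by the dimension $d^2-1$ via the strictly increasing kernel chain---is sound, and it is essentially the paper's proof in different clothing: the paper reaches the same ``simple eigenvalue $1$ plus nilpotent rest'' structure through the Jordan normal form, after a spectral argument ruling out roots of unity, which your block-structure derivation of nilpotency would actually let you skip. However, the step you yourself flag as the delicate one contains a genuine error. It is \emph{not} true that the trace-zero (equivalently, Hilbert--Schmidt orthogonal to $\idty$) subspace ``is preserved by every unital channel.'' From the duality relation, $\tr S(X)=\tr\bigl(S^*(\idty)X\bigr)$, so a unital $S$ (Heisenberg picture) preserves the trace-zero hyperplane iff $S^*(\idty)=\idty$, i.e.\ iff $S$ is bistochastic. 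The CDC itself is a counterexample: for $d=2$ and $\sigma=\ketbra{0}{0}$, the trace-zero operator $X=\ketbra{0}{0}-\ketbra{1}{1}$ has $T_\sigma(X)=\idty$, which is not trace-zero. Your decomposition mixes the two pictures: the fixed vector $\idty$ is a Heisenberg-picture fact, while invariance of the trace-zero hyperplane is a Schr\"odinger-picture fact (a consequence of trace preservation), and in that picture the fixed vector is $\sigma$, not $\idty$. The two halves are simultaneously available only in the bistochastic case $\sigma=\idty/d$, whereas the theorem is stated for arbitrary $T_\sigma$. As written, your restriction $N$ of $D_S$ to the complement is therefore not well defined.

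The gap is repairable without changing your plan, in either picture. Schr\"odinger picture: since $S^*(\sigma)=S^*\bigl(S^{*k}(\rho)\bigr)=S^{*k}\bigl(S^*(\rho)\bigr)=\sigma$, the state $\sigma$ is a fixed point of $S^*$, so $\M_d=\C\sigma\oplus\{X:\tr X=0\}$ is a decomposition into two genuinely $S^*$-invariant subspaces; setting $N:=S^*\vert_{\{\tr X=0\}}$ gives $N^k=T^*_\sigma\vert_{\{\tr X=0\}}=0$, your kernel-chain argument bounds the nilpotency index $m$ by $d^2-1$, and then $S^{*m}(\rho)=\sigma+N^m(\rho-\sigma)=\sigma$ shows $S^{*m}=T^*_\sigma$, so minimality of $k$ forces $k\leq m\leq d^2-1$. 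Alternatively, Heisenberg picture: use only the invariance of $\C\idty$, which makes $D_S$ block \emph{upper triangular} with respect to $\C\idty\oplus\{\tr X=0\}$, with lower-right compression $N$ and some row vector $c^T$ in the corner; then the lower-right block of $D_S^k$ is $N^k$, which must vanish by comparison with $D_{T_\sigma}$ (whose lower-right block is zero), and the corner block of $D_S^j$ involves only powers $N^i$ with $i<j$, hence stabilizes once $N$ dies, again giving $S^m=T_\sigma$. Note that either of these computations is also exactly what is needed for a point you assert but do not prove: that $N^{k-1}\neq 0$. That claim is the contrapositive of ``$N^m=0\Rightarrow S^m=T_\sigma$,'' which is not a tautology and requires the fixed-point (or corner-stabilization) argument above.
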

Before we prove the theorem, we need the following statement from linear algebra.
\begin{lemma}[Jordan normal form]\label{thm:jordan} \cite{LinearAlgebra}
For every matrix $T\in \M_D$ there is an invertible matrix $R$, such that
\begin{equation}
T=R\left(\bigoplus\limits_{\ell=1}^K J_\ell(\lambda_\ell)\right) R^{-1}
 =RJR^{-1},
\end{equation}
with
\begin{equation}
 J_\ell(\lambda):=\left(
                       \begin{array}{cccc}
                       \lambda & 1 &  &  \\
                        &\ddots  &\ddots  &  \\
                        &  & \lambda & 1 \\
                        &  &  & \lambda \\
                       \end{array}
                     \right)\in \M_{d_\ell},
\end{equation}
where the matrix $J$ is called the \textbf{Jordan normal form} of $T$ and the $J_\ell(\lambda)$ are the \textbf{Jordan blocks} of size $d_\ell$ corresponding to the eigenvalue $\lambda$. The number of Jordan blocks with the eigenvalue $\lambda$ is the \textbf{geometric multiplicity} of it while the sum of the dimensions $\sum_{\lambda=\lambda_\ell}d_\ell$ is the \textbf{algebraic multiplicity} of the eigenvalue.
\end{lemma}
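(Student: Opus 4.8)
The plan is to prove the Jordan normal form over $\C$ by reducing it to the nilpotent case. Since $\C$ is algebraically closed, the minimal polynomial of $T$ splits as $\prod_\ell (x-\lambda_\ell)^{e_\ell}$ with the $\lambda_\ell$ distinct, and I would first establish the \emph{primary decomposition}. The factors $(x-\lambda_\ell)^{e_\ell}$ are pairwise coprime, so a B\'ezout/partial-fraction argument yields polynomials $p_\ell(x)$ with $\sum_\ell p_\ell(x)\prod_{m\neq\ell}(x-\lambda_m)^{e_m}=1$. Evaluating at $T$ gives a family of commuting projections onto the generalized eigenspaces $V_\ell:=\ker(T-\lambda_\ell)^{e_\ell}$, hence a $T$-invariant splitting $\C^D=\bigoplus_\ell V_\ell$. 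On each summand the operator $N_\ell:=(T-\lambda_\ell)|_{V_\ell}$ is nilpotent, so the entire problem reduces to putting a single nilpotent operator into block form, after which I reassemble and restore the scalars $\lambda_\ell$ on the diagonal.

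The heart of the argument — and the step I expect to be the main obstacle — is the nilpotent case: showing that a nilpotent $N$ on a space $V$ admits a basis that is a disjoint union of \emph{Jordan chains} $\{v,Nv,\dots,N^{s-1}v\}$ with $N^s v=0$. I would argue by induction on $\dim V$. Setting $W:=\operatorname{im}N$, the restriction $N|_W$ is nilpotent of strictly smaller index and $\dim W<\dim V$, so by the inductive hypothesis $W$ has a Jordan basis consisting of chains. The generating (top) vector $w$ of each such chain lies in $W=\operatorname{im}N$, hence $w=Nu$ for some $u\in V$; prepending $u$ lengthens that chain by one. These lifted chains are then supplemented by a basis of a complement of $\operatorname{im}N\cap\ker N$ inside $\ker N$, which supplies the remaining length-one chains. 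The delicate point is verifying that the union of all these chains is linearly independent and spans $V$; this follows from a dimension count using $\dim V=\dim\ker N+\dim\operatorname{im}N$ together with the fact that the bottom of every chain lies in $\ker N$ and that the bottoms form a basis of $\ker N$.

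With the nilpotent basis in hand, each chain of length $s$ contributes exactly one Jordan block $J_\ell(\lambda_\ell)$ of size $s$: in the ordered chain basis $N$ acts as the nilpotent shift, so $T=\lambda_\ell+N$ acquires $\lambda_\ell$ on the diagonal and $1$ on the superdiagonal, matching the displayed block. Collecting the chain bases across all $V_\ell$ produces an ordered basis of $\C^D$; taking $R$ to be the matrix whose columns are these basis vectors gives $T=RJR^{-1}$ with $J=\bigoplus_\ell J_\ell(\lambda_\ell)$. Finally I would read the multiplicities off the construction: the number of chains with bottom eigenvalue $\lambda$ equals $\dim\ker(T-\lambda)$, the stated geometric multiplicity, while their total length equals $\dim V_\lambda=\sum_{\lambda_\ell=\lambda}d_\ell$, the algebraic multiplicity. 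Uniqueness of the block sizes, if desired, follows from the rank formula that the number of Jordan blocks for $\lambda$ of size at least $j$ equals $\operatorname{rank}(T-\lambda)^{j-1}-\operatorname{rank}(T-\lambda)^{j}$, though for the present application only existence is needed.
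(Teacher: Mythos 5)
The paper offers no proof of this lemma to compare against: it quotes the Jordan normal form as a standard fact with a citation to a linear algebra textbook, and only its consequence (the nilpotency order of a Jordan block $J_\ell(0)$ equals its size $d_\ell$) is used in the proof of the bound on the root order. Your argument is therefore judged on its own terms, and it is a correct, complete route: the classical proof by primary decomposition followed by induction in the nilpotent case. The B\'ezout step for the pairwise coprime factors $(x-\lambda_\ell)^{e_\ell}$, giving commuting spectral projections onto the generalized eigenspaces $V_\ell$, is sound, and the reduction to a single nilpotent operator is exactly right. In the nilpotent induction your write-up is thinnest at the point you yourself flag: the dimension count only shows that the number of vectors produced equals $\dim V$, so linear independence still needs the standard argument — apply $N$ to a vanishing combination; the lifted chains map onto the Jordan basis of $W=\operatorname{im}N$, forcing every coefficient to vanish except those of the chain bottoms and of the supplementary kernel vectors, and these survivors are independent precisely because the bottoms of the $W$-chains form a basis of $\ker N\cap\operatorname{im}N$ while the supplementary vectors extend it to a basis of $\ker N$. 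Since you name exactly these facts, this is a matter of assembly rather than a gap. Two trivial loose ends: the induction needs $V\neq 0$ so that $\ker N\neq 0$ and hence $\dim\operatorname{im}N<\dim V$ (with $N=0$ as base case), and each chain must be ordered from $N^{s-1}v$ up to $v$ so that the $1$'s land on the superdiagonal as displayed. Your reading of the multiplicities matches the lemma's conventions (the number of blocks for $\lambda$ equals $\dim\ker(T-\lambda)$, the total block size equals the dimension of the generalized eigenspace), and you correctly note that uniqueness of the block sizes, while available from the rank formula, is not needed for the application in this paper.
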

If we express the property of $S$ being a root of a CDC in terms of the representation matrix, we can establish the upper bound in the following way:

\begin{proof}[Proof of Theorem~\ref{thm:bound}]%
We consider $T_\sigma$ and $S$ as operators on $\M_d$, a space of dimension $D=d^2$.
The eigenvalues of the CDC channel $T$ are $1$ and $0$. Hence for any eigenvalue $\lambda$ of $S$ we have $\lambda^k\in\{0,1\}$, so $S$ likewise has only the eigenvalues $1$ and $0$. Since $1$ is a simple eigenvalue of $T$, the eigenvalue $1$ of $S$ is also simple, and there will be no roots of unity. Hence the algebraic multiplicity of the eigenvalue $0$ of $S$ is $d^2-1$. The only remaining question is the decomposition of this dimension into Jordan blocks. The root order will be the smallest $k$ such that $J_\ell(0)^k=0$ for all $\ell$. The smallest power $k$ for which $J_\ell(0)^k=0$ is $d_\ell$. Hence the root order is the dimension of the largest Jordan block. Clearly, this becomes largest when there is only one block, i.e., when $k=d^2-1$.
\end{proof}

Although the above theorem gives an explicit upper bound for the order of a root of a CDC it does not answer the question whether this bound is attained by some channel $S$. In order to reach this bound we additionally need to care about complete positivity of $S$, which cannot be decided solely from the representation matrix of $S$. However, we will see in the next sections that there always exist channels which attain the upper bound $d^2-1$ for the order of a root of the bistochastic CDC. Since for the bistochastic CDC $T_{\idty/d}$ and $T_{\idty/d}^*$ are represented by the same map, we omit the $^*$ in the notation to distinguish between Heisenberg and Schrödinger picture for the rest of this section.

\subsection{All roots of the bistochastic qubit CDC }
\label{Sec:Qubit}

The most elementary case arises if we consider the input and output system to be qubits. According to theorem \ref{thm:bound}, the highest possible order of a CDC-Root is three in this case. We will give an explicit characterization of the whole set of maximal qubit roots if $T_\sigma$ is the bistochastic CDC, i.e. $\sigma=\frac{1}{d}\idty$. Thereby we verify that there are indeed roots of the CDC of order three.\\
In reference \cite{mary1} it is proven that every bistochastic qubit channel can be decomposed as
\begin{equation}\label{eq:qubit_case}
	T(\rho)=U_1\Lambda[U_2\rho U_2^*]U_1^*
\end{equation}
where $U_1$ and $U_2$ are unitaries and $\Lambda$ is a Pauli diagonal channel, i.e. $\Lambda[\sigma_i]=\lambda_i\sigma_i\ \text{with}\ i\in\{0,1,2,3\}$. If we represent the set of possible qubit states $\rho$ via the Bloch sphere $\rho(\vec{r})=\frac{1}{2}(\hilberteins+\vec{r}\vec{\sigma}), \Vert\vec{r}\Vert\leq 1$ and translate the action of the maps induced by $U_1,U_2$ and $\Lambda$ to maps acting on $\vec{r}$, we find that an arbitrary qubit channel can be written as a composition of a diagonal map $\mathbf{L}=\text{diag}\{\lambda_1,\lambda_2,\lambda_3\}$ and two rotations:
\begin{equation}\label{eq:bloch}
T(\rho(\vec{r}))=\frac{1}{2}(\hilberteins+(\mathbf{R_1 L R_2}\vec{r})\vec{\sigma}), \ \mathbf{R_i}\in SO(3)
\end{equation}
Clearly, since the $\mathbf{R_i}$ are invertible, the rank of the linear map $T$ is determined by the choice of the values $\{\lambda_i\}$ of the diagonal map $\mathbf{L}$. Thus, the rank of $\mathbf{L}$ completely determines the order of a potential root $T$. Indeed, let us assume $T$ is a $k^{\rm th}$ order root of the CDC, then $k=3$ if the rank of $\mathbf{L}$ is two and $k=2$ if the rank is one, as can be seen from  the corresponding Jordan normal forms.

To explore the possible configurations of the $\{\lambda_i\}$ resulting in completely positive maps we need to have a closer look at the set the Pauli diagonal channels. It is a well-known fact that the set of possible Pauli diagonal channels form a tetrahedron \cite{mary2}. To get a suitable characterization of the rank of the channels, we use the fact that every Pauli diagonal channel has a Kraus decomposition of the form $T(X)=\sum_{i=0}^3\mu_i\sigma_i X \sigma_i$. Some straightforward calculations show that these channels are indeed Pauli diagonal with the relations
\begin{align}\label{eq:muuslambdas}
\lambda_0&=\mu_0+\mu_1+\mu_2+\mu_3 \\ \lambda_1&=\mu_0+\mu_1-\mu_2-\mu_3 \notag\\
\lambda_2&=\mu_0-\mu_1+\mu_2-\mu_3 \notag  \\ \lambda_3&=\mu_0-\mu_1-\mu_2+\mu_3 \notag
\end{align}
between $\{\mu_i\}$ and $\{\lambda_i\}$. If, on the other hand, we consider a map $T$ in terms of the $\lambda_i$, we can solve \eqref{eq:muuslambdas} for the $\mu_i$ to get a Kraus decomposition of $T$. It is easy to see that the eigenvectors of the Choi operator $\choi_T:=T\otimes \id(\ketbra{\Omega}{\Omega})$ are the vectors $\ket{\Omega_i}:=\hilberteins\otimes\sigma_i\ket{\Omega}$ with corresponding eigenvalue $\mu_i$. Hence, complete positivity of $T$ is expressed by the condition that all $\mu_i$ are positive numbers and unitality of $T$ requires that the $\mu_i$ add up to one, i.e. $\lambda_0=1$. This yields the inequalities
\begin{align}\label{eq:tetrahedron}
\lambda_1+\lambda_2+\lambda_3& \geq -1\\
\lambda_1-\lambda_2-\lambda_3&\geq -1\notag\\
-\lambda_1+\lambda_2-\lambda_3& \geq -1\notag  \\
-\lambda_1-\lambda_2+\lambda_3& \geq -1\notag
\end{align}
characterizing the tetrahedron formed by the set of admissible $\{\lambda_i\}$. These relations imply that the set of Pauli diagonal channels with one eigenvalue equal to zero can be represented by squares inside the tetrahedron (see figure \ref{fig:RootQubitChannel}).\\

\begin{figure}
	\centering
		\includegraphics{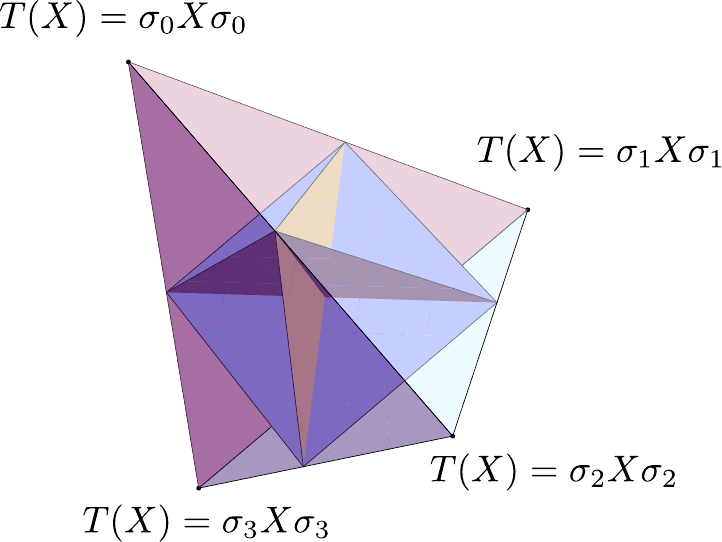}
	\caption{The tetrahedron of Pauli diagonal channels parameterized by the eigenvalues $\lambda_i$ with $i=1,2,3$. The extremal points of the tetrahedron represent the configuration where exactly one of
the Kraus weights $\mu_i$ equals one. The squares inside the tetrahedron mark the configurations, where one of the eigenvalues of the channel is equal to zero. The bistochastic CDC is represented by the intersection point of all three squares.}
	\label{fig:RootQubitChannel}
\end{figure}

Furthermore, we find an interesting connection between the Kraus rank of a channel and the rank as a linear map for qubit channels. Starting at an extremal point of the tetrahedron we find a reversible channel with Kraus rank one. If we move on the line between two extremal points we increase the Kraus rank by one under preservation of the reversibility. In the middle of the line the rank is lowered by two, because it lies on the intersection of two squares. Hence we need at least three Kraus operators to construct a qubit channel with rank exactly one less than maximal. Since the Pauli diagonal map characterizes the rank completely this holds for all bistochastic qubit channels.\\

The following theorem gives a complete characterization of maximal roots for qubit systems:
\begin{theorem}[Maximal qubit roots] A bistochastic channel $T:\mathcal{S}(\C^2)\mapsto \mathcal{S}(\C^2)$ is a maximal root of the bistochastic CDC, iff $T$ is of the form
\begin{equation}
T(\rho(\vec{r}))=\frac{1}{2}(\hilberteins+(\mathbf{R_2^{T}R_1 L R_2}\vec{r})\vec{\sigma}), \ \mathbf{R_i}\in SO(3)
\end{equation}
where $\mathbf{L}$ has exactly two non-zero eigenvalues,  $\mathbf{R_2}\in SO(3)$ is an arbitrary rotation and there exist angles $\phi, \theta \in [0,2\pi)$ such that
\begin{equation}
\mathbf{R_1}=
\left(
\begin{array}{ccc}
0 & \cos \theta & -\sin \theta \\
\sin \phi & \cos \phi \sin \theta & \cos \phi \cos \theta \\
-\cos \phi & \sin \phi \sin \theta & \sin \phi \cos \theta
\end{array}
\right)\,  .
\end{equation}
If we choose $\mathbf{L}=\text{diag}({0,\lambda_2, \lambda_3})$ we have the restrictions $\vert \lambda_2\pm\lambda_3\vert\leq 1$ for complete positivity and $\tan \phi =-\frac{\lambda_2}{\lambda_3}\tan \theta $.
\end{theorem}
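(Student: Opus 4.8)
The plan is to reduce the entire statement to a question about the single $3\times3$ real matrix $M$ acting on Bloch vectors in \eqref{eq:bloch}. Since a bistochastic channel fixes the center of the Bloch ball, iterating $T$ just iterates $M$, so $T^k$ is the bistochastic CDC precisely when $M^k=0$. Hence $T$ is a maximal (order-three) root if and only if $M$ is nilpotent of index exactly three. For a $3\times3$ matrix this is equivalent to $M$ being nilpotent of rank two, i.e. a single Jordan block; in particular $M^2\neq0$ then comes for free, so I will not have to track it separately.

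First I would bring the general representation $M=\mathbf{R}_a\mathbf{L}_0\mathbf{R}_b$ of \eqref{eq:bloch} into the shape used in the theorem. Nilpotency forces $\mathrm{rank}\,M=2$, hence $\mathrm{rank}\,\mathbf{L}_0=2$, so exactly one eigenvalue of $\mathbf{L}_0$ vanishes. Conjugating by a cyclic coordinate permutation (which lies in $SO(3)$, so no reflection is needed) and absorbing it into the rotations, I may assume $\mathbf{L}=\mathrm{diag}(0,\lambda_2,\lambda_3)$ with $\lambda_2,\lambda_3\neq0$. Setting $\mathbf{R_2}$ equal to the right rotation and $\mathbf{R_1}$ equal to its product with the left rotation rewrites any such decomposition as $M=\mathbf{R_2^T R_1 L R_2}$, and conversely any expression of this form is again of type \eqref{eq:bloch}. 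The payoff of this normalization is that $M$ is then similar, via $\mathbf{R_2}$, to $N:=\mathbf{R_1 L}$, so $M$ is nilpotent of index three iff $N$ is.

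The computational core is to characterize when $N=\mathbf{R_1 L}$ is nilpotent. As $\mathbf{R_1}$ is invertible and $\mathbf{L}$ has rank two, $N$ has rank two automatically, so it suffices to force the characteristic polynomial to be $x^3$: $\det N=0$ (automatic, since the first column of $N$ vanishes), $\tr N=0$, and the second coefficient $c_2(N)=0$. Writing $(\mathbf{R_1})_{ij}$ for the entries, the vanishing first column collapses the sum of principal $2\times2$ minors to $c_2(N)=\lambda_2\lambda_3\big((\mathbf{R_1})_{22}(\mathbf{R_1})_{33}-(\mathbf{R_1})_{23}(\mathbf{R_1})_{32}\big)$. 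The clean fact that drives everything is that for a rotation the cofactor matrix equals the matrix itself, so this bracket is exactly $(\mathbf{R_1})_{11}$; thus $c_2(N)=\lambda_2\lambda_3(\mathbf{R_1})_{11}$, and since $\lambda_2\lambda_3\neq0$ the condition $c_2(N)=0$ is equivalent to $(\mathbf{R_1})_{11}=0$, while the trace condition reads $\tr N=\lambda_2(\mathbf{R_1})_{22}+\lambda_3(\mathbf{R_1})_{33}=0$.

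It then remains to translate these two scalar conditions into the displayed data. The constraint $(\mathbf{R_1})_{11}=0$ says the first row and first column are unit vectors with vanishing first entry, so I may write them as $(0,\cos\theta,-\sin\theta)$ and $(0,\sin\phi,-\cos\phi)^T$; orthonormality of the remaining rows and columns together with $\det\mathbf{R_1}=+1$ then fixes the lower $2\times2$ block uniquely, reproducing the stated matrix for $\mathbf{R_1}$. Substituting $(\mathbf{R_1})_{22}=\cos\phi\sin\theta$ and $(\mathbf{R_1})_{33}=\sin\phi\cos\theta$ into $\tr N=0$ yields $\lambda_2\cos\phi\sin\theta+\lambda_3\sin\phi\cos\theta=0$, i.e. $\tan\phi=-\tfrac{\lambda_2}{\lambda_3}\tan\theta$. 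Complete positivity is invariant under the rotations $\mathbf{R_1},\mathbf{R_2}$, hence is governed entirely by the tetrahedron inequalities \eqref{eq:tetrahedron} on the eigenvalues of $\mathbf{L}$; putting $\lambda_1=0$ collapses them to $|\lambda_2\pm\lambda_3|\le1$. Reading every step backwards gives the converse, since the special form already encodes $(\mathbf{R_1})_{11}=0$ and the tangent relation supplies $\tr N=0$, making $N$ nilpotent of rank two and $T$ a maximal root. I expect the main obstacle to be the bookkeeping in the normalization step — verifying that \eqref{eq:bloch} can always be reshaped to $\mathbf{R_2^T R_1 L R_2}$ with the zero eigenvalue in the first slot using only $SO(3)$ moves — and the uniqueness claim that $(\mathbf{R_1})_{11}=0$ plus orthonormality and orientation pin down exactly the stated matrix, including the resolution of the discrete $\phi\mapsto\phi+\pi$ sign branch.
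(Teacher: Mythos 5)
Your proposal is correct and follows essentially the same route as the paper's proof: reduce to $N=\mathbf{R_1}\mathbf{L}$ by the identical similarity argument, force the characteristic polynomial of $N$ to be $-z^3$ (your two conditions $\mathrm{tr}\,N=0$ and $c_2(N)=0$ are literally the paper's $\mathrm{tr}\,\Lambda=0$ and $\det\Lambda=0$ for the $2\times 2$ submatrix $\Lambda$), then parametrize $\mathbf{R_1}$ and read off the tangent relation, with complete positivity handled by the tetrahedron inequalities at $\lambda_1=0$. Your only real deviation is the cofactor identity $\mathrm{cof}(\mathbf{R})=\mathbf{R}$ for $\mathbf{R}\in SO(3)$, which converts $\det\Lambda=0$ into $(\mathbf{R_1})_{11}=0$ in one line, whereas the paper reaches the same parametrization via an ansatz of proportional rows plus orthogonality bookkeeping; this is a small but genuine streamlining of the same argument. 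One caveat on the sign branch you flagged yourself: imposing $\det\mathbf{R_1}=+1$ does \emph{not} reproduce the displayed matrix, whose determinant is in fact $-1$ (try $\theta=\phi=0$); this is an inconsistency in the paper's own statement rather than a flaw in your argument, and it is immaterial because negating the first column of $\mathbf{R_1}$ restores determinant $+1$ without changing the product $\mathbf{R_1}\mathbf{L}$, since that column multiplies the zero eigenvalue of $\mathbf{L}$.
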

\begin{proof}
As already mentioned, for a maximal root of the bistochastic CDC we have to choose the Pauli diagonal map $\mathbf{L}$ in \eqref{eq:bloch} such that it has exactly two non-zero eigenvalues. Without loss of generality we assume $\mathbf{L}=\text{diag}({0,\lambda_2, \lambda_3})$. In the following, we determine all roots of the bistochastic CDC whose decomposition \eqref{eq:bloch} is such that $\mathbf{R_2}=0$. The general case reduces to this particular setting by the following argument: If $\mathbf{R_1L R_2}$ is a general root of the bistochastic CDC we have
\begin{equation}
\mathbf{0}=(\mathbf{R_1 L R_2})^n= \mathbf{R_2}^{-1}(\mathbf{ R_2 R_1 L})^n \mathbf{R_2}\, ,
\end{equation}
which means that $\mathbf{ R L}$, with $\mathbf{ R}=\mathbf{ R_2 R_1}$, is a root as well. On the other hand, if $\mathbf{R_1L}$ is a root then
\begin{equation}
\mathbf{0}=(\mathbf{R_1 L})^n=\mathbf{R_2}(\mathbf{R_2}^{-1}\mathbf{R_1 L R_2})^n \mathbf{R_2}^{-1}\, ,
\end{equation}
and hence $\mathbf{R_2}^{-1}\mathbf{R_1 L R_2}$ is again a root.

Hence, the task is to determine all three-dimensional orthogonal matrices $\mathbf{R}$ such that the composition

\begin{equation}
	D_T=\mathbf{R}\cdot \text{diag}({0,\lambda_2, \lambda_3})
\end{equation}
is nilpotent. This is equivalent to saying that all eigenvalues of $D_T$ equal zero. Since $D_T$ is of the form
\begin{equation}
D_T=
\left(
\begin{array}{ccc}
0 & \lambda_2 r_{12} & \lambda_3 r_{13} \\
0 & \lambda_2 r_{22} & \lambda_3 r_{23} \\
0 & \lambda_2 r_{32} & \lambda_3 r_{33}
\end{array}
\right)
\end{equation}
its characteristic polynomial can be written as $\chi (z)= \det (D_T-z\mathbbm{1})=-z\cdot (z^2 - z\text{tr} ( \Lambda) +\det ( \Lambda))$, where we introduced the submatrix
\begin{equation}
\Lambda=
\left(
\begin{array}{cc}
 \lambda_2 r_{22} &  \lambda_3 r_{23} \\
 \lambda_2 r_{32} &  \lambda_3 r_{33}
\end{array}
\right)\,.
\end{equation}
Hence, the condition that all three eigenvalues of $D_T$ equal zero is equivalent to the condition:
\begin{equation} \label{eq:condition}
\det (\Lambda))=0 \quad \text{and} \quad \text{tr} (\Lambda))=0
\end{equation}
The first part of the condition is already satisfied if the 2x2 submatrix of $\mathbf{R}$ consisting of $\{r_{22},r_{23},r_{32},r_{33}\}$ has a vanishing determinant. Therefore we choose the following ansatz:
\begin{equation}
\mathbf{R} =
\left(
\begin{array}{ccc}
 & \hdots & \\
\vdots &  a &  b \\
 &  z\cdot a &   z\cdot b
\end{array}
\right)\, ,\quad a,b,z \in \mathbbm{R}
\end{equation}
We get additional restrictions on $\{a,b,z\}$ from the orthogonality of $\mathbf{R}$, that is, the rows of $\mathbf{R}$ have to be normalized and mutually orthogonal. This leads to the conditions
\begin{eqnarray}
a^2+b^2&<&1 \\
z^2(a^2+b^2)&<&1 \nonumber
\end{eqnarray}
and
\begin{equation}
(z^2+1)(a^2+b^2)=1 \, .
\end{equation}
These relations suggest a parametrization in terms of trigonometric functions. If we expand the rows and columns accordingly, we find that
\begin{equation}
\mathbf{R}=
\left(
\begin{array}{ccc}
0 & \cos \theta & -\sin \theta \\
\sin \phi & \cos \phi \sin \theta & \cos \phi \cos \theta \\
-\cos \phi & \sin \phi \sin \theta & \sin \phi \cos \theta
\end{array}
\right)\, \text{with} \ \phi,\theta \in \mathbbm{R}
\label{eq:rot}
\end{equation}
is a possible parametrization of all $\mathbf{R}\in SO(3)$ which fulfill the first condition of \eqref{eq:condition}. The second condition fixes a relation between the two variables, such that $\theta$ can be an arbitrary angle and $\phi$ has to satisfy $\tan \phi =-\frac{\lambda_2}{\lambda_3}\tan \theta $.

\end{proof}

\subsection{Roots via perturbation}
\label{roots}

Inspired by the approach of the last section, one might try to imitate the construction of a maximal root via the composition of a rank-lowering and rotation maps in higher dimensions. For instance, one could replace the Pauli matrices by Weyl operators and consequently compose a rank-lowering Weyl diagonal map with some unitary channel representing the rotation. This procedure fails mainly due to a lack of a Bloch-sphere interpretation of the state space and the dimensional gap between $SU(d)$ and $SO(d^2-1)$, i.e., the unitary channels do not cover all possible $SO(d^2-1)$-rotations of the state space in dimensions beyond $d=2$.

Nevertheless, our aim is to construct maximal roots of the bistochastic CDC for arbitrary system dimension $d$. For this purpose, we recall the fact, that the Jordan normal form of a maximal root is, in an appropriate basis, given by the direct sum of a projector on the maximally mixed state $\sigma=\frac{1}{d}\idty$, representing the bistochastic CDC, and a maximal Jordan block to the eigenvalue zero. The key idea is now to consider this nilpotent Jordan block as a $\varepsilon$-weighted perturbation of the bistochastic CDC in such a way that the complete positivity remains untouched. The following theorem shows, that this is indeed possible.
\begin{theorem}
\label{thm:rootsperturb}
Let $B:=\{A_1=\mathbbm{1}_d,A_2,...,A_{d^2}\}$ be a basis of hermitian operators in $\M_d$ and $B^*=\{A^1=\frac{\mathbbm{1}_d}{d},A^2,...,A^{d^2}\}$ its dual basis, with respect to the Hilbert-Schmidt scalar product $\tr ({A^j}^*A_i)=\delta_{i,j}$. Then, for small enough $\varepsilon\in\R$, the map
\begin{equation}
T_{\varepsilon}(X)=\frac{1}{d}\idty \tr X+\varepsilon \sum_{i=2}^{d^2-1}A_i\tr ({A^{i+1}}^* X)\,,
\end{equation}
is completely positive and therefore a root of the bistochastic CDC of maximal order $d^2-1$.
\end{theorem}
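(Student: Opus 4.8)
The plan is to split $T_\varepsilon = T_0 + \varepsilon\Phi$ into the unperturbed bistochastic CDC $T_0(X) = \frac{1}{d}\idty\,\tr X$ and the fixed perturbation $\Phi(X) = \sum_{i=2}^{d^2-1} A_i\,\tr({A^{i+1}}^* X)$, and then to treat the root-order claim (purely algebraic) and the complete-positivity claim (the substantive part) separately.

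First I would determine the action of $T_\varepsilon$ on the basis $B$ using duality. The defining relation $\tr({A^j}^* A_i) = \delta_{ij}$ together with $A^1 = \idty_d/d$ forces $\tr(A_i) = d\,\delta_{1,i}$, so that $A_2,\dots,A_{d^2}$ are traceless. Feeding the basis elements into $T_\varepsilon$ and using $\tr({A^{i+1}}^* A_j) = \delta_{i+1,j}$ then yields the shift pattern $A_1 \mapsto A_1$, $A_2 \mapsto 0$, and $A_j \mapsto \varepsilon A_{j-1}$ for $3 \le j \le d^2$. Reading off the representation matrix in the basis $B$ via ${(D_{T_\varepsilon})}_{ij} = \tr({A^i}^* T_\varepsilon(A_j))$, one obtains the block form $D_{T_\varepsilon} = 1 \oplus \varepsilon N$, where $N$ is a single nilpotent Jordan block of size $d^2-1$. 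This is already the Jordan normal form, so by Lemma~\ref{thm:jordan} and the argument in the proof of Theorem~\ref{thm:bound}, the power $(\varepsilon N)^{d^2-1}$ vanishes while $(\varepsilon N)^{d^2-2} \neq 0$ for $\varepsilon \neq 0$. Hence $T_\varepsilon^{d^2-1}$ is the map fixing $A_1 = \idty$ and annihilating every traceless $A_j$, which is precisely $T_{\idty/d}$, whereas $T_\varepsilon^{d^2-2} \neq T_{\idty/d}$. The same computation gives $T_\varepsilon(\idty) = \idty$, i.e.\ unitality. Thus, once $T_\varepsilon$ is known to be a legitimate channel, it is automatically a root of maximal order $d^2-1$ (the value $\varepsilon = 0$ being excluded from the maximality claim).

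The genuinely new content is complete positivity, which I would establish through Choi's theorem as a perturbation argument at the level of Choi operators. Writing $\choi_{T_\varepsilon} = \choi_{T_0} + \varepsilon\,\choi_\Phi$ by linearity, the key observation is that the unperturbed Choi operator of the bistochastic CDC is $\choi_{T_0} = \frac{1}{d}\idty_d \otimes \frac{1}{d}\idty_d = \frac{1}{d^2}\idty_{d^2}$, a strictly positive multiple of the identity whose smallest eigenvalue $1/d^2$ is bounded away from zero. Provided $\choi_\Phi$ is Hermitian and bounded, the estimate $\choi_{T_\varepsilon} \ge \bigl(\tfrac{1}{d^2} - |\varepsilon|\,\norm{\choi_\Phi}\bigr)\,\idty_{d^2}$ shows that $\choi_{T_\varepsilon} \ge 0$ for every $|\varepsilon| \le 1/(d^2\,\norm{\choi_\Phi})$, and positivity of the Choi operator is equivalent to complete positivity of $T_\varepsilon$.

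The only point requiring care before this estimate applies is the Hermiticity of $\choi_\Phi$. Here I would use that $B$ and $B^*$ both consist of hermitian operators: the dual basis of a basis of the real vector space of hermitian matrices, taken with respect to the real Hilbert--Schmidt inner product, again lies in that space. Consequently, for hermitian $X$ the coefficients $\tr({A^{i+1}}^* X)$ are real and $\Phi(X)$ is hermitian, so $\Phi$ is Hermiticity-preserving and its Choi operator is Hermitian. The main obstacle is exactly this complete-positivity step: the root-order statement follows essentially for free from the Jordan structure, but there is no \emph{a priori} reason a nilpotent perturbation should respect positivity. What makes it work is that the bistochastic CDC sits at the maximally mixed, strictly positive interior point $\frac{1}{d^2}\idty_{d^2}$ of the cone of Choi states, which leaves precisely the ``room'' needed to absorb a small $\varepsilon$-perturbation realizing the full maximal Jordan block.
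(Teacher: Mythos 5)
Your proof is correct and takes essentially the same route as the paper's: both establish complete positivity by writing the Choi operator as the strictly positive $\tfrac{1}{d^2}\idty_{d^2}$ of the bistochastic CDC plus an $\varepsilon$-weighted Hermitian perturbation (Hermitian precisely because the dual basis consists of hermitian operators), so positivity survives for small $|\varepsilon|$. The only differences are presentational: you spell out the shift/Jordan-block computation giving the order $d^2-1$ (which the paper leaves implicit in the discussion preceding the theorem) and replace the paper's continuity-of-eigenvalues remark by the explicit bound $|\varepsilon|\leq 1/(d^2\Vert\choi_\Phi\Vert)$.
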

\begin{proof}
First we emphasize that the dual basis $B^{*}$ is hermitian as well. To verify this, consider the $A_i$ as basis for the real vector space of all hermitian operators and construct the unique dual basis $A^j$ as a linear combination of the $A_i$ with real coefficients. The key point is to choose $\varepsilon$ in such a way that $T_{\varepsilon}$ becomes completely positive. A way to establish the existence of such an $\varepsilon$ is to consider the Choi operator $\choi_{T_\varepsilon}$ and choose $\varepsilon$ such that $\choi_{T_\varepsilon}$ is positive. A straightforward calculation yields
\begin{equation}
\choi_{T_{\varepsilon}}= \frac{1}{d^2}\idty + \frac{\varepsilon}{d}\sum_{i=2}^{d^2-1}A_i\otimes \overline{A^{i+1}}\,,
\end{equation}
where $\overline{A^i}$ denotes the complex conjugate of $A^i$ in a fixed basis of $\C^d$. Hence, the map $T_\varepsilon$ is completely positive if $\varepsilon$ is small enough to ensure
\begin{equation}\label{eq:epsbound}
- \idty \leq \varepsilon d \sum_{i=2}^{d^2-1}A_i\otimes \overline{A^{i+1}}=:\rho_\varepsilon\,.
\end{equation}
Since the chosen basis of $\M_d$ and its dual are hermitian this amounts to a comparison between the eigenvalues of the hermitian operators $\rho_\varepsilon$ and $\idty$. The continuity of the eigenvalue distribution of $\rho_\varepsilon$ assures the existence of a proper $\varepsilon$ for arbitrary $d$ and therefore guarantees the complete positivity of $T_{\varepsilon}$ for small enough $\varepsilon$.
\end{proof}
The construction scheme presented in the proof of theorem \ref{thm:rootsperturb} also applies for all CDCs $T_\sigma$ such that $\sigma$ has full rank. If, however, the rank $r$ of $\sigma$ is less than maximal this construction only leads to a root of order $r^2-1$.

Of course, for a given basis $A_i$ and its dual $A^i$ there is an explicit bound on $\varepsilon$ in terms of the eigenvalues of $\rho_\varepsilon$. This bound seems to be decreasing with the dimension $d$, due to the enlargement of the spectral radius in \eqref{eq:epsbound}, while going to larger dimensions. This suggests that all maximal roots of the bistochastic CDC in higher dimensions obtained by this construction get closer to the bistochastic CDC with increasing $d$. The following proposition refutes this statement:
\begin{proposition}
For arbitrary dimension $d\in\N$ there is always a maximal root $T_{\varepsilon}$ of the bistochastic CDC such that their cb-norm distance satisfies
\begin{equation}
\Vert T_{\varepsilon}-T_{\frac{1}{d}\idty}\Vert_{cb}\geq \frac{d-1}{d}\, .
\end{equation}
\end{proposition}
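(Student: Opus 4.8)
The plan is to show that the explicit family $T_\varepsilon$ from Theorem~\ref{thm:rootsperturb} already contains, for a suitable choice of basis and of the perturbation strength, a maximal root whose distance to the bistochastic CDC is bounded below as claimed. Since $T_\varepsilon-T_{\idty/d}=\varepsilon N$ with the nilpotent shift $N(X)=\sum_{i=2}^{d^2-1}A_i\,\tr({A^{i+1}}^*X)$, the whole problem is governed by $N$ together with the largest admissible value of $\varepsilon$. I would fix once and for all a self-dual hermitian basis, taking $A_1=\idty$ and $A_2,\dots,A_{d^2}$ orthonormal and traceless, so that $A^i=A_i$ for $i\ge 2$ and the hypotheses of Theorem~\ref{thm:rootsperturb} hold.

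First I would replace the (hard to control) cb-norm by a computable lower bound. The cb-norm distance of two channels dominates the trace-norm distance of their Choi states, obtained by feeding one half of a maximally entangled vector $\ket{\Omega}$ through the channel; hence
\begin{equation}
\Vert T_\varepsilon-T_{\idty/d}\Vert_{cb}\ \geq\ \Vert \choi_{T_\varepsilon^*}-\choi_{T_{\idty/d}^*}\Vert_1 .
\end{equation}
Using the explicit Choi operator from Theorem~\ref{thm:rootsperturb} (in its Schr\"odinger form) one computes $\choi_{T_\varepsilon^*}-\choi_{T_{\idty/d}^*}=\tfrac{\varepsilon}{d}\,S$ with the hermitian, traceless operator $S:=\sum_{i=2}^{d^2-1}A^{i+1}\otimes (A_i)^T$, so the right-hand side equals $\tfrac{|\varepsilon|}{d}\Vert S\Vert_1$.

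Next I would pin down the largest allowed $\varepsilon$. Complete positivity of $T_\varepsilon$ is precisely positivity of its Choi operator, i.e.\ condition \eqref{eq:epsbound}; a short flip-and-transpose argument shows that the operator $\sum_{i}A_i\otimes\overline{A^{i+1}}$ appearing there is isospectral to $S$, so the constraint reads $|\varepsilon|\,d\,\Vert S\Vert\le 1$ and the optimal choice is $|\varepsilon|=(d\,\Vert S\Vert)^{-1}$, where $\Vert S\Vert$ is the operator norm. Substituting gives
\begin{equation}
\Vert T_\varepsilon-T_{\idty/d}\Vert_{cb}\ \geq\ \frac{\Vert S\Vert_1}{d^2\,\Vert S\Vert}\,,
\end{equation}
so the proposition reduces to producing one basis for which $\Vert S\Vert_1\geq d(d-1)\Vert S\Vert$. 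Since $S$ acts on $\C^{d^2}$, this is the statement that the effective rank $\Vert S\Vert_1/\Vert S\Vert$ of $S$ is at least $d^2-d$, i.e.\ that $S$ has an essentially flat spectrum with a kernel of dimension at most $d$.

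The main obstacle is therefore the spectral analysis of $S$ for general $d$: $S$ is the reshuffling of the nilpotent shift $N$, and its eigenvalues cannot be read off term by term. The plan is to use a highly symmetric basis --- for $d=2$ the Pauli basis, and for general $d$ a hermitian basis built from the Weyl operators, ordered so that $N$ shifts successively through them --- and to prove that $S$ then carries the flat spectrum $\{\pm m\}$ with total multiplicity $d(d-1)$ together with a $d$-dimensional kernel, which makes the last inequality an equality and yields exactly $(d-1)/d$. As a sanity check, for $d=2$ this basis gives $S=\tfrac12(\sigma_2\otimes\sigma_1-\sigma_3\otimes\sigma_2)$ with eigenvalues $\{1,-1,0,0\}$, so that $\Vert S\Vert_1/(d^2\Vert S\Vert)=\tfrac12$, in agreement with the claimed bound.
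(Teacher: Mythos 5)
Your reduction chain is sound as far as it goes: the bound $\Vert T_\varepsilon-T_{\idty/d}\Vert_{cb}\geq \Vert \choi_{T_\varepsilon^*}-\choi_{T_{\idty/d}^*}\Vert_1$ is valid, the identification of the Choi difference with $\tfrac{\varepsilon}{d}S$ is computed correctly, the swap-plus-transpose argument showing that $\sum_i A_i\otimes\overline{A^{i+1}}$ is isospectral to $S$ is fine, and $|\varepsilon|=(d\Vert S\Vert)^{-1}$ is indeed a CP-admissible (if not necessarily optimal) choice, so any such $T_\varepsilon$ is a maximal root. The problem is that after this reduction the entire burden of the proposition rests on the claim that for every $d$ there exists a hermitian orthonormal basis whose reshuffled shift $S=\sum_{i=2}^{d^2-1}A_{i+1}\otimes A_i^T$ satisfies $\Vert S\Vert_1\geq d(d-1)\Vert S\Vert$, and this claim is never proved: you verify it for $d=2$ and for general $d$ you only announce ``the plan is to use a hermitian basis built from the Weyl operators and to prove that $S$ carries the flat spectrum.'' That is the crux, not a routine detail. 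The spectrum of $S$ depends globally on the basis \emph{and} on its ordering (only consecutive pairs $A_i,A_{i+1}$ are coupled), hermitian combinations of Weyl operators destroy the group structure that would make such a computation tractable, and demanding effective rank $\Vert S\Vert_1/\Vert S\Vert\geq d^2-d$ out of an operator on $\C^{d^2}$ means demanding an almost perfectly flat spectrum; nothing in the proposal makes this plausible beyond the qubit case, and it is not even clear the flat-spectrum statement is true for any Weyl-type ordering when $d\geq 3$.

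It is instructive to compare with how the paper avoids exactly this difficulty. There the cb-norm is bounded below not by the Choi trace norm but by evaluation on a single observable, $\Vert T_\varepsilon-T_{\idty/d}\Vert_{cb}\geq\Vert(T_\varepsilon-T_{\idty/d})(A_3)\Vert_\infty=|\varepsilon|\,\Vert A_2\Vert_\infty$, and the CP constraint is then \emph{localized} by the rescaling $A_i\mapsto\delta^{-i+3}A_i$, $A^i\mapsto\delta^{i-3}A^i$ for $i>3$: in the limit $\delta\to 0$ only the single term $A_2\otimes\overline{A^{3}}$ matters for positivity of the Choi operator, giving the sufficient condition $d|\varepsilon|\Vert A_2\Vert_\infty\Vert \overline{A^3}\Vert_\infty\leq 1$. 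Choosing $A_3$ traceless with eigenvalues $\pm 1$ (padding with a zero eigenvalue in odd dimension) then yields $|\varepsilon|\Vert A_2\Vert_\infty\geq (d-1)/d$ with no spectral analysis of the full perturbation required. If you want to salvage your route, you would either need to carry out the flat-spectrum construction explicitly for all $d$, or import the paper's $\delta$-rescaling idea, under which your operator $S$ degenerates to the single term $A_3\otimes A_2^T$ whose singular values factorize and can be controlled by hand.
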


\begin{proof}
We start with a further specification of the basis $A_i$ and choose $\Vert A_3\Vert_\infty=1$. This implies the following lower bound for the cb-norm difference between $T_\varepsilon$ and the bistochastic CDC:
\begin{align}
\Vert T_\varepsilon-T_{\frac{1}{d}\idty}\Vert_{cb}&\geq \sup\limits_{\Vert A\Vert_{\infty}\leq1}\Vert T_\varepsilon(A)-T_{\frac{1}{d}\idty}(A)\Vert_{\infty}  \\
&\geq\Vert T_\varepsilon(A_3)-T_{\frac{1}{d}\idty}(A_3)\Vert_{\infty} \nonumber\\
&= \vert\varepsilon\vert\Vert A_2\Vert_{\infty} \nonumber
\end{align}
Hence, the cb-norm is bounded from below by $(d-1)/d$ if we are free to choose $\varepsilon \Vert A_2\Vert_{\infty}=(d-1)/d$. Of course, this choice should not violate the complete positivity of $T_\varepsilon$.

In order to show that this is indeed possible with some further restrictions to the choice of $A_3$, we consider the transformation $A_i\mapsto \delta^{-i+3} A_i$ and $A^i\mapsto \delta^{i-3}A^i$ with $\delta\in \R\backslash\{0\}$ and $i>3$, which still gives a maximal root of the bistochastic CDC. The Choi operator $\choi_{T_\varepsilon}$ transforms under this map according to
\begin{equation}
\choi_{T_\varepsilon} \mapsto \frac{1}{d^2}\idty + \frac{\varepsilon}{d}A_2\otimes \overline{A^{3}}+\delta\frac{\varepsilon}{d}\sum_{i=3}^{d^2-1}A_i\otimes \overline{A^{i+1}} \,.
\end{equation}
Since we get a maximal root of the bistochastic CDC for arbitrarily small but non-zero $\delta$, perturbation theory tells us that we only have to compare the eigenvalues of the first two parts of the sum when considering the limit $\delta\rightarrow 0$. In other words, we may neglect the influence of terms $A_i\otimes \overline{A^{i+1}}$ with $i>2$ on the eigenvalue problem if $\delta$ is chosen sufficiently small. Hence, we have to establish the inequality
\begin{equation}\label{eq:firsttwo}
-\idty \leq d \varepsilon A_2\otimes \overline{A^{3}}\,.
\end{equation}
By taking the operator norm on both sides of the inequality, we find the following sufficient criterion for complete positivity of $T_{\varepsilon}$:
\begin{equation}\label{eq:ineq}
|d||\varepsilon|\Vert A_{2}\Vert_{\infty}\Vert \overline{A^{3}}\Vert_{\infty}\leq 1
\end{equation}
 To obtain the restrictions imposed on $A_3$ by this inequality, we furthermore assume the basis $B$ to be orthogonal, that is, $A^i=A_i/\tr  A_i^2$. We denote the eigenvalues of $A_3$ by $a_3^i$, then, orthogonality to $A_1=\idty$ requires $\sum_{i}a_3^{i}=0$. Additionally, $A_3$ must satisfy $\max\limits_i\vert a^i_3\vert=1$  to guarantee $\Vert A_3\Vert_\infty=1$. With these notations \eqref{eq:ineq} changes into
\begin{equation}
|\varepsilon|\Vert A_{2}\Vert_{\infty}\leq\frac{|\tr  A_{3}^2|}{d}=\frac{\sum_i\vert a^i_3\vert^2}{d}.
\end{equation}
Thus, equation \eqref{eq:firsttwo} is satisfied and $T_{\varepsilon}$ is completely positive, if we choose $A_3$ such that $d-1\leq \sum_i\vert a^i_3\vert^2$. This can be satisfied in even dimensions by choosing $a_3^{i}=(-1)^{i}$, actually leading to a lower bound of $1$ for the cb-norm difference. In odd dimensions we choose $a_3^{i}=(-1)^{i}$ for $i<d$ and $a_3^d=0$.
\end{proof}

\section{Finitely correlated construction of $k$-dependent states}\label{sec:fcs}
The general concept of \emph{finitely correlated states} and the occurring correlations are considered in reference \cite{fcs}. We want to deal with the correlations that occur if a maximal CDC-root $S$ is used to generate a functional on the infinite spin chain. For this purpose we consider the quasi-local algebra $\mathcal{A}:=\bigotimes_{i=-\infty}^{\infty}\M_{d_i}$ generated by algebras of finite subsets $\mathcal{A}_\Lambda:=\bigotimes_{z\in\Lambda} \mathcal{A}_z$ on finite chain elements $\Lambda\subset \Z$. A $k$-dependent state $\omega$ is defined in the following way \cite{Petz,Matus}:
\begin{definition}[$k$-dependent state]
A state $\omega: \mathcal{A}\mapsto \C$ is called \textbf{$k$-dependent} if algebras separated by $k$ or more sites are independent, i.e.
\begin{equation} \label{eq:dependent}
\omega\left(A_{(-\infty,n)}\otimes\underbrace{\mathbbm{1}\otimes...\otimes\mathbbm{1}}_{k\text{-times}}\otimes A_{[n+k+1,\infty)}\right)=\omega\left(A_{(-\infty,n)}\right)\omega\left(A_{[n+k+1,\infty)}\right).
\end{equation}
\end{definition}
To understand the application of maximal CDC-roots in this context, we first recall that every channel $T:\M_{d'}\mapsto \M_{d}$ admits a Stinespring representation \cite{stinespring}, i.e.
\begin{equation}\label{eq:stinespring_dilation2}
T(X)=V^*(X\otimes\mathbbm{1}_k)V\ \ \forall X\in\M_{d'},
\end{equation}
where $V:\mathbb{C}^d\mapsto\mathbb{C}^{d'}\otimes\mathbb{C}^{k}$ is an isometry, that is $V^*V=\mathbbm{1}_d$. Since we want to concatenate the channel $T$, input dimension $d=\text{dim}(\hilbertH_{in})$ and output dimension $d'=\text{dim}(\hilbertH_{out})$ will be equal. Furthermore, the dimension $k$ of the ancilla system is equal to the Kraus rank of $T$. Keeping this representation in mind, we define the following map $\mathbbm{E}_{A}:\M_d\otimes \M_k\mapsto \M_d$:
\begin{equation}
\quad \mathbbm{E}_{A}(X)=V^*(X\otimes A)V\, ,
\end{equation}
where the isometry $V$ is chosen to be the same as in \eqref{eq:stinespring_dilation2} and therefore $\mathbbm{E}_{\mathbbm{1}}(X)=T(X)$ holds. The concatenation of several $\mathbbm{E}_{A_i}$, with $i=1,\ldots ,n$, together with any $\rho\in\mathcal{S}(\C^d)$ defines a functional $\omega_n:\M_k^{\otimes n}\mapsto \C$ via
\begin{equation}
\omega_n(A_1\otimes A_2 \otimes ... \otimes A_n)=\tr\left(\rho\mathbbm{E}_{A_1}\circ\mathbbm{E}_{A_2}\circ...\circ\mathbbm{E}_{A_n}(\mathbbm{1})\right).
\end{equation}
Due to the unitality of $T$, we can extend this functional to the positive half chain $\mathcal{A}_+:=\bigotimes_{i=0}^{\infty}\M_k$ via:
\begin{align}
\omega_{n+1}\left(A_{[1,n]}\otimes\mathbbm{1}\right)&=\tr\left(\rho\mathbbm{E}_{A_1}\circ\mathbbm{E}_{A_2}\circ...\circ\mathbbm{E}_{A_n}\circ\mathbbm{E}_{\mathbbm{1}}(\mathbbm{1})\right)\\
&=\tr\left(\rho\mathbbm{E}_{A_1}\circ\mathbbm{E}_{A_2}\circ...\circ\mathbbm{E}_{A_n}(\mathbbm{1})\right) \nonumber \\
&=\omega_n\left(A_{[1,n]}\right)\, .\nonumber
\end{align}
Furthermore, if we choose $\rho$ as an invariant state of $T$, i.e. $\tr\left(\rho\mathbbm{E}_{\mathbbm{1}}(X)\right)=\tr\left(\rho X\right)$, we can extend the functional also to the negative half-chain $\mathcal{A}_-:=\bigotimes_{i=-\infty}^{0}\M_k$ through setting:
\begin{align}
\omega_{n+1}\left(\mathbbm{1}\otimes A_{[1,n]}\right)&=\tr\left(\underbrace{\rho\mathbbm{E}_{\mathbbm{1}}}_{\rho}\circ\mathbbm{E}_{A_1}\circ\mathbbm{E}_{A_2}\circ...\circ\mathbbm{E}_{A_n}(\mathbbm{1})\right)\\
&=\omega_n\left(A_{[1,n]}\right)\, .\nonumber
\end{align}
Combining these two extensions we define a functional on the infinite spin chain $\mathcal{A}$. If we furthermore define the \emph{shift operator} $\sigma$ by setting
\begin{equation}
\sigma:\mathcal{A}\mapsto \mathcal{A} \quad,\quad \sigma(A_1\otimes...\otimes A_n\otimes\mathbbm{1})=\mathbbm{1}\otimes A_1\otimes\cdots\otimes A_n,
\end{equation}
we find that $\omega$ is a \emph{translation invariant state}, i.e. $\omega=\omega\circ\sigma$. We refer to $\omega$ as \emph{finitely correlated state} generated by $(T,\rho)$.

Now we choose the generating channel $T$ as a $k^{\rm th}$ root of the bistochastic CDC, that is, $\mathbbm{E}_{\mathbbm{1}}^{d^2-1}(X)=\frac{\mathbbm{1}}{d}\tr X$. Our construction scheme for roots of the bistochastic CDC yields channels $T$ for which the maximally mixed state is an invariant state, hence, we choose $\rho=\frac{\mathbbm{1}}{d}$. This means we construct a finitely correlated state on an infinite spin chain with a certain dependency length. This length is equal to the order of the root, as the following calculation for the resulting functional $\omega$ shows:
\begin{alignat}{1}
&\omega\left(\ldots \otimes A_n\otimes\underbrace{\mathbbm{1}\otimes \ldots\otimes\mathbbm{1}}_{d^2-1\text{-times}}\otimes A_{n+d^2}\otimes \ldots\right)\\ \notag
&=\tr\left(\rho  \ldots\circ\mathbbm{E}_{A_{n-1}}\circ\mathbbm{E}_{\mathbbm{1}}^{d^2-1}\circ\mathbbm{E}_{A_{n+d^2}}\circ\ldots (\mathbbm{1}) \right)\\ \notag
&=\tr\left(\rho\ldots \circ\mathbbm{E}_{A_{n-1}}(\mathbbm{1})\right)\tr\left(\rho\mathbbm{E}_{A_{n+d^2}}\circ\ldots (\mathbbm{1})\right)\\ \notag
&=\omega\left(A_{(-\infty,n)}\right)\omega\left(A_{[n+d^2,\infty)}\right).
\end{alignat}
This satisfies the form of \eqref{eq:dependent} and therefore the maximal roots generate $d^2-1$-dependent states on an infinite spin chain, with $d$ the dimension of $\rho$.

\section{Memory channels}\label{sec:memch}
Commonly it is assumed that successive uses of a channel are uncorrelated in the sense that identical inputs at different time steps produce identical outputs. However, almost all real physical processes exhibit some correlation in time, i.e. the transformation of the states at some time $t$ depends to some extent on the states at previous times $t'<t$. If we consider the repeated application of a quantum channel, e.g. sending photons through some fiber, these correlations can be taken into account by introducing an additional system $\mathcal{M}$, which we refer to as the \emph{memory system}.
\begin{figure}
\centering
	\includegraphics{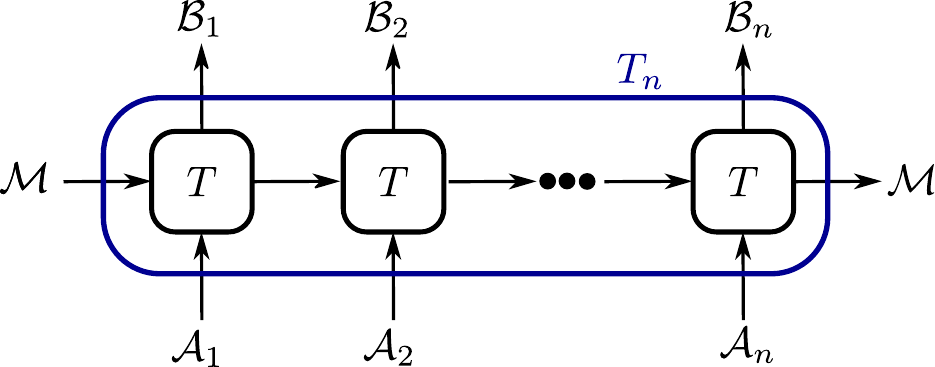}
	\caption{A quantum memory channel $T$ and its n-fold concatenation $T_n$. $\mathcal{M}$ denotes the memory systems which is used to model interaction between the different concatenation steps.}
	\label{fig:memory}
\end{figure}
For the sake of clarity we define $\mathcal{S}(\hilbertH_{in})=:\mathcal{A}$ for the input system and $\mathcal{S}(\hilbertH_{out})=:\mathcal{B}$ for the output system. Then, the $n$-fold concatenation $T_n: \mathcal{M}\otimes \mathcal{A}^{\otimes n}\mapsto  \mathcal{B}^{\otimes n}\otimes\mathcal{M}$ of the quantum memory channel $T:\mathcal{M}\otimes\mathcal{A} \mapsto  \mathcal{B}\otimes\mathcal{M}$ can be expressed via
\begin{equation}\label{eq:memory_concatenation}
T_n=\left(id_{\mathcal{B}}^{\otimes n-1}\otimes T\right)\circ..\circ\left(id_{\mathcal{B}}\otimes T\otimes id_{\mathcal{A}}^{\otimes n-2}\right)\circ\left(T\otimes id_{\mathcal{A}}^{\otimes n-1}\right)\, ,
\end{equation}
where $id_{X}: X\rightarrow X$ denotes the ideal or noiseless channel on system $X$, see figure \ref{fig:memory} for an illustration. Due to this construction, the elements of the output algebras $\mathcal{B}$ will certainly be affected by the choice of the initial memory state $\rho$ of the memory system $\mathcal{M}$. Indeed, it is a natural question to ask if all elements of the output system are influenced in the same way or if the effect of the memory dies out after sufficiently many concatenation steps. This leads to the notion of \emph{forgetful} memory channels, which are those quantum memory channels where the influence of the initialization of the memory systems vanishes exponentially with the number of time steps, see reference \cite{memorychannel} for a precise definition. Our aim in this section is to construct forgetful memory channels where the effect of the initial memory system vanishes completely after a certain number of concatenations. We refer to such channels as \emph{strictly forgetful} memory channels. The following definition expresses this in mathematical terms:
\begin{definition}[Strictly Forgetful Memory Channel]
A quantum memory channel $T$ is strictly forgetful, iff there is some $n\in\mathbbm{N}$, such that
\begin{equation}
\label{Eq:DefStricForget}
||\tr_{\mathcal{B}^{\otimes n}}\left[T_n((\sigma_{\mathcal{M},1}-\sigma_{\mathcal{M},2})\otimes\sigma_{sys})\right]||_{1}=0
\end{equation}
for all $\sigma_{\mathcal{M},1}$,\ $\sigma_{\mathcal{M},2} \in \mathcal{S}(\mathcal{M})$ and $\sigma_{sys}\in\mathcal{S}(\mathcal{A}^{\otimes n})$.
\end{definition}
This definition assumes that there is no entanglement between the initial memory state and the system. We refer to the minimal $n$, such that \eqref{Eq:DefStricForget} holds, as \emph{memory depth} of the channel $T$. Equivalent to this definition is to say that the \emph{memory branch}, i.e. the channel $T_{\mathcal{M}}:\mathcal{S}(\mathcal{M})\otimes\mathcal{S}(\mathcal{A}^{\otimes n})\mapsto \mathcal{S}(\mathcal{M})$ defined by $T_{\mathcal{M}}(\sigma_{\mathcal{M}}\otimes\sigma_{sys}):=\tr_{\mathcal{B}^{\otimes n}}[T_n(\sigma_{\mathcal{M}}\otimes\sigma_{sys})]$, completely depolarizes the information of the memory input state, see fig. \ref{fig:memory_2}.
\begin{figure}
\centering
	\includegraphics{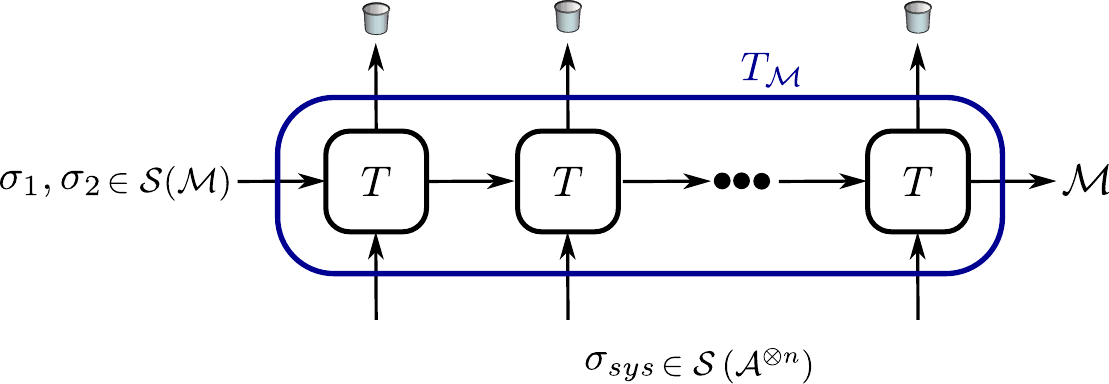}
	\caption{To characterize the strict forgetfulness of a memory channel $T$, we consider the channel's n-fold concatenation, where we neglect the output system $\mathcal{B}^{\otimes n}$ (depicted by the bins). The memory channel is strictly forgetful, iff there is an $n\in\mathbbm{N}$, such that the output on the memory system for any two different input states cannot be distinguished via an arbitrary measurement.}
	\label{fig:memory_2}
\end{figure}
Our aim is to construct strictly forgetful memory channels $T$ with memory depth $n$, exploiting the results about $n$-th order roots of the CDC. Similarly to the section about the construction of maximal roots in arbitrary dimensions, we construct the forgetful memory channels by expressing the problem in terms of matrix equations.

For this purpose we fix bases of operators on the memory channels input, output and memory system, i.e., $\{M_{i}\}$ is a basis for $\mathcal{M}$ and $\{A_i\}$ respectively $\{B_i\}$ are bases of $\mathcal{A}$ respectively $\mathcal{B}$, where the number $i\in\{1,...,d^2_X\}$ of operators correspond the respective dimension of the Hilbert space. The matrix representation of a memory channel $T$ is then given through
\begin{equation}
\bra{i,j}{D_{T}}\ket{k,l}:= \tr({M^i}^{*}\otimes {B^j}^{*} \ T(M_k\otimes A_l)),
\end{equation}
In what follows, we try to identify the parts of the matrix, which determine the forgetfulness of the corresponding memory channel $T$. For that purpose, we first consider the identity
\begin{align}
\tr_{\mathcal{B}^{\otimes 2}}\left[T_2(\sigma_{\mathcal{M}}\otimes\sigma_1\otimes\sigma_2)\right]&=\tr_{\mathcal{B}^{\otimes 2}}\left[\Big( id_{\mathcal{B}}\otimes T\Big)\circ\Big(T\otimes id_{\mathcal{B}}\Big)(\sigma_{\mathcal{M}}\otimes\sigma_1\otimes\sigma_2)\right]\\
&=\tr_{\mathcal{B}}\Big[T\Big(\tr_{\mathcal{B}}\left[T\left(\sigma_{\mathcal{M}}\otimes\sigma_1\right)\right]\otimes\sigma_2\Big)\Big]\, .\nonumber
\end{align}
If we apply this identity to the definition of the memory branch $T_{\mathcal{M}}(\sigma_{\mathcal{M}}\otimes\sigma_{sys})$ of the $n$-th concatenation acting on a separable input state , i.e. $\sigma_{sys}=\sigma_1\otimes...\otimes{\sigma_n}$, we find the following term:
\begin{align} \label{eq:concatenation_split} \notag
T_{\mathcal{M}}(\sigma_{\mathcal{M}}\otimes\sigma_{sys})&=\tr_{\mathcal{B}^{\otimes n}}\left[T_n(\sigma_{\mathcal{M}}\otimes{\sigma_1}\otimes...\otimes{\sigma_n})\right]\\
&=\tr_{\mathcal{B}}[T(\tr_{\mathcal{B}}[T(...T( \tr_{\mathcal{B}}\left[T(\rho_{\mathcal{M}}\otimes\sigma_1)\right]\otimes\sigma_2)\otimes...\otimes \sigma_n)].
\end{align}
By introducing the set of parameterized channels $T_{\mathcal{M},\sigma_i}:\mathcal{S}(\mathcal{M})\mapsto\mathcal{S}(\mathcal{M})$ on the memory branch, where $T_{\mathcal{M},\sigma_i}(\sigma_{\mathcal{M}}):=\tr_{\mathcal{B}}\left[T(\sigma_{\mathcal{M}}\otimes\sigma_i)\right]$ with $\sigma_i \in \mathcal{S}(\mathcal{A})$, we can rewrite \eqref{eq:concatenation_split} as concatenation of parameterized channels on the memory branch:
\begin{equation}\label{eq:parameterized}
T_{\mathcal{M}}(\sigma_{\mathcal{M}}\otimes\sigma_{sys})=T_{\mathcal{M},\sigma_{n}}\circ...\circ T_{\mathcal{M},\sigma_1}(\sigma_{\mathcal{M}}).
\end{equation}
We point out that so far we have just reformulated the description of the memory branch in terms of parametrized maps. If we now identify every $T_{\mathcal{M},\sigma_i}$ with its matrix representation $D_{T_{\mathcal{M},\sigma_{i}}}$ via the coefficients
\begin{equation}
\bra{k}D_{T_{\mathcal{M},\sigma_{i}}}\ket{l}:=\tr({M^k}^* \tr_{\mathcal{B}}[T(M_l\otimes\sigma_{i})]),
\end{equation}
for some basis of operators $\{M_k\}$ on the memory system, we can express \eqref{eq:parameterized} as multiplication of parametrized matrices:
\begin{equation}\label{eq:conc}
D_{T_{\mathcal{M}}}=D_{T_{\mathcal{M},\sigma_n}}\cdot D_{T_{\mathcal{M},\sigma_{n-1}}}\cdot\ldots \cdot D_{T_{\mathcal{M},\sigma_1}}
\end{equation}
Note that the left-hand-side of this equation implicitly depends on the system state $\sigma_{sys}$. Equation \eqref{eq:conc} turns out to be the crucial matrix equation to construct channels of memory depth $n$ utilizing the results of $n$-th order CDC-roots. Indeed, by the definition of a strictly forgetful channel $T$, the left-hand side needs to represent a completely depolarizing channel. Obviously, if the memory depth of $T$ is $n$, then $T_{\mathcal{M},\sigma}$ needs to be a root of a CDC of order $n_\sigma\leq n$ for all $\sigma\in\S(\mathcal{A})$ since we may choose $\sigma_i = \sigma$ for all $i$ in \eqref{eq:conc}. However, it is not enough to demand that $T_{\mathcal{M},\sigma}$ is a root of a CDC for all $\sigma\in\S(\mathcal{A})$ in order to construct a strictly forgetful memory channel. Indeed, there exist memory channels which are strictly forgetful for all system states of the form $\sigma_{sys}=\sigma^{\otimes n}$ but not for general system states, see the example at the end of this section.

The main obstacle to overcome is now that \eqref{eq:conc} needs to be completely depolarizing for all choices of the $\sigma_i$. Before we tackle this problem, let us argue that strict forgetfulness for all separable states of the input system $\mathcal{A}^{\otimes n}$ implies strict forgetfulness for all elements of  $\S(\mathcal{A}^{\otimes n})$. Suppose $T$ is strictly forgetful for all separable states $\sigma_{sys}=\sigma_1\otimes...\otimes{\sigma_n}$ and choose an operator basis $\{\sigma_\alpha\}_{\alpha=1,\ldots,d_{\mathcal{A}}^2}$ of $\mathcal{A}$ such that each $\sigma_\alpha$ is a quantum state. An arbitrary, possibly entangled, state $\rho_{sys}$ can then be written as
\begin{equation}
\rho_{sys}=\sum_{\alpha_1,\ldots,\alpha_n}c_{\alpha_1\ldots\alpha_n}\sigma_{\alpha_1}\otimes\ldots\otimes\sigma_{\alpha_n}\, .
\end{equation}
Let $\sigma_{\mathcal{M},1}$ and $\sigma_{\mathcal{M},2}$ be arbitrary states of the memory, then we get
\begin{align}
\tr_{\mathcal{B}^{\otimes n}}\left[T_n(\sigma_{\mathcal{M},1}\otimes\rho_{sys})\right]&=&\sum_{\alpha_1,\ldots,\alpha_n}c_{\alpha_1\ldots\alpha_n}\tr_{\mathcal{B}^{\otimes n}}\left[T_n(\sigma_{\mathcal{M},1}\otimes \sigma_{\alpha_1}\otimes\ldots\otimes\sigma_{\alpha_n}\right]\\
&=&\sum_{\alpha_1,\ldots,\alpha_n}c_{\alpha_1\ldots\alpha_n}\tr_{\mathcal{B}^{\otimes n}}\left[T_n(\sigma_{\mathcal{M},2}\otimes \sigma_{\alpha_1}\otimes\ldots\otimes\sigma_{\alpha_n}\right] \nonumber \\
&=&\tr_{\mathcal{B}^{\otimes n}}\left[T_n(\sigma_{\mathcal{M},2}\otimes\rho_{sys})\right]\, , \nonumber
\end{align}
and hence $T$ is also strictly forgetful for all $\rho\in\S(\mathcal{A}^{\otimes n})$. Thus, we can restrict to separable states $\sigma_{sys}$ without loss of generality, which means that everything boils down to assuring that \eqref{eq:conc} equals a CDC. Hence, we need to have a closer look at the parametrized matrices $T_{\mathcal{M},\sigma_{i}}$. To facilitate the derivation we choose the bases $\{M_i\},\{A_i\}$ and $\{B_i\}$ to be hermitian (so the dual bases) and the identity as the first element for each of them. We then find for some fixed $\sigma_i$:
\begin{align}
\bra{k}D_{T_{\mathcal{M},\sigma_{i}}}\ket{l}&=\tr_{\mathcal{M}}(M^k \ \ \tr_{\mathcal{B}}[T(M_l\otimes\sigma_{i})])\\
&=d_{\mathcal{B}}\cdot  \tr_{\mathcal{MB}}(M^k\otimes \underbrace{\frac{1}{d_\mathcal{B}}\mathbbm{1}_{\mathcal{B}}}_{=B^1} \cdot T(M_l\otimes\sum_{r=1}^{d_\mathcal{A}^2}\underbrace{\tr(A^r \sigma_i)}_{=:\alpha_r(\sigma_i)} A_r))\nonumber \\
&=\frac{d_{\mathcal{B}}}{d_{\mathcal{A}}}\cdot \underbrace{\bra{k,1}{D_{T}}\ket{l,1}}_{=:\bra{k}X_{1,1}\ket{l}}+d_{\mathcal{B}}\cdot \sum_{r=2}^{d_{\mathcal{A}}^2}\alpha_r(\sigma_i)\underbrace{\bra{k,1}{D_{T}}\ket{l,r}}_{=:\bra{k}X_{1,r}\ket{l}}\, .\nonumber
\end{align}
This equation shows that for appropriate bases every matrix on the right-hand side of \eqref{eq:conc} can be expressed as composition of the submatrix $X_{1,1}$ plus some state specific weighted sum of submatrices $\{X_{1,2...d^2}\}$ of the matrix $D_{T}$. As already mentioned, the matrix
\begin{equation}
\label{eq:memchan}
D_{T_\mathcal{M},\sigma}=\frac{d_{\mathcal{B}}}{d_{\mathcal{A}}}\cdot X_{1,1}+d_{\mathcal{B}}\cdot\sum_{r=2}^{d_\mathcal{A}^2}\alpha_r(\sigma) X_{1,r}
\end{equation}
must necessarily represent a root of a CDC of order at most $n$ for arbitrary $\sigma$. Moreover, all $n$-fold products of matrices $D_{T_\mathcal{M},\sigma_i}$ with arbitrary $\sigma_i$'s must also represent the CDC. This can be assured by fixing bases and choosing the matrix blocks according to
\begin{equation}\label{eq:submatrices}
X_{1,l}=\left(
    \begin{array}{cc}
     \frac{d_\mathcal{A}}{d_\mathcal{B}}\delta_{1,l} & 0 \\
      \frac{d_\mathcal{A}}{d_\mathcal{B}} v\delta_{1,l} &  J_l \\
    \end{array}
  \right) \in M_{d_{\mathcal{M}}^2}(\mathbb{C})\, ,
\end{equation}
where the $J_l$ are upper triangular matrices of dimension $d_\mathcal{M}^2-1$ and $v$ is a real valued vector. By choosing those blocks appropriately it can be assured that the nilpotency order of the $J_l$ is $n$, which, by our results on maximal roots of the CDC, is bounded from above by $n\leq d_\mathcal{M}^2-1$. This choice ensures that every $D_{T_{\mathcal{M},\sigma_{i}}}$ in \eqref{eq:conc} is of the form
\begin{equation}
D_{T_{\mathcal{M},\sigma_{i}}}=\left(
    \begin{array}{cc}
      1 & 0 \\
       v\ & \sum_{l=1}^{d_\mathcal{A}^2}\alpha_l(\sigma_i)J_l  \\
    \end{array}
  \right).
\end{equation}
 If we put this into the right-hand side of \eqref{eq:conc} we find that the memory branch is indeed completely depolarizing for an arbitrary input state of the form $\sigma_{sys}=\sigma_1\otimes...\otimes \sigma_n$ and the necessary number of concatenation steps is upper bounded by $d_\mathcal{M}^2-1$.

What remains is to show that complete positivity is not violated if we choose the sub-matrices $X_{1,k}$ in the proposed way. Here we emphasize that the behavior of the memory branch is just affected by the submatrices $X_{1,k}$ and we are completely free to choose $X_{2...d^2,k}$ to assure complete positivity. Moreover, we are free to choose the matrix blocks $J_l$ and $v$ with arbitrarily small but non-zero norm, without disturbing the forgetfulness property of $T$. Hence, the matrix blocks $J_l$ and $X_{2...d^2,k}$ can be considered as perturbation of the bistochastic completely depolarizing channel $T_{\idty}$ on $\mathcal{A}$, $\mathcal{B}$ and $\mathcal{M}$ defined via $T_{\idty}(\sigma_{\mathcal{MA}})=\frac{1}{d_\mathcal{B}d_\mathcal{M}}\idty_\mathcal{BM}$ for all $\sigma_{\mathcal{MA}}\in \S(\mathcal{MA})$.

A natural question to ask is whether this construction yields all strictly forgetful memory channels or if there are examples which cannot be transformed to the case of upper triangular matrices by clever choice of a basis. It turns out that our construction indeed covers all strictly forgetful memory channels.
\begin{theorem}
\label{thm:memchan}
Let $T:\mathcal{M}\otimes\mathcal{A} \mapsto  \mathcal{B}\otimes\mathcal{M}$ be a strictly forgetful memory channel. For appropriate bases of $\mathcal{A},\mathcal{B}$ and $\mathcal{M}$ the memory branch is of the form \eqref{eq:submatrices}. This implies that the memory depth of strictly forgetful memory channels is upper bounded by $d_\mathcal{M}^2-1$.
\end{theorem}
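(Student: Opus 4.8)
The plan is to run the construction preceding the theorem in reverse: starting from an arbitrary strictly forgetful $T$ with memory depth $n$, I extract from \eqref{eq:conc} a purely linear-algebraic condition on the parametrized blocks $N_\sigma:=\sum_l\alpha_l(\sigma)J_l$ and show that a single change of the operator basis of $\mathcal{M}$ brings all of them simultaneously into strictly upper triangular form, which is exactly the lower-right block of \eqref{eq:submatrices}. First I fix hermitian bases of $\mathcal{A},\mathcal{B},\mathcal{M}$ with the identity as first element and $B^1=\idty/d_\mathcal{B}$, so that (as in the computation leading to \eqref{eq:memchan}) $D_{T_{\mathcal{M},\sigma}}$ is affine in $\sigma$; since each $T_{\mathcal{M},\sigma}$ is trace preserving its first row is $(1,0,\dots,0)$, so $D_{T_{\mathcal{M},\sigma}}=\bigl(\begin{smallmatrix}1&0\\ w_\sigma& N_\sigma\end{smallmatrix}\bigr)$, while the CDC on $\mathcal{M}$ is $\mathrm{diag}(1,0,\dots,0)$ in the same basis. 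Strict forgetfulness at depth $n$ then says, via \eqref{eq:conc}, that every product $D_{T_{\mathcal{M},\sigma_n}}\cdots D_{T_{\mathcal{M},\sigma_1}}$ equals this CDC matrix, and in particular its lower-right block $N_{\sigma_n}\cdots N_{\sigma_1}$ vanishes for all states $\sigma_1,\dots,\sigma_n$.

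The key step is to upgrade this to a statement about the individual slope matrices $J_l$. Because $N_{\sigma_i}=\sum_l\alpha_l(\sigma_i)J_l$ with $\alpha_l(\sigma_i)=\tr(A^l\sigma_i)$, the product $N_{\sigma_n}\cdots N_{\sigma_1}$ is a polynomial of degree at most one in each coordinate vector $\alpha(\sigma_i)$, and as $\sigma_i$ ranges over states these vectors fill a full-dimensional convex body; hence a standard coefficient-extraction (polarization) argument forces every ordered length-$n$ product $J_{l_n}\cdots J_{l_1}$ to vanish. Consequently the multiplicative semigroup generated by $\{J_1,\dots,J_{d_\mathcal{A}^2}\}$ consists of nilpotent matrices, since any of its elements raised to a high enough power is a product of length $\ge n$ and therefore $0$.

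By the classical theorem of Levitzki on semigroups of nilpotent matrices, there is a single invertible $P$ — equivalently a change of operator basis of $\mathcal{M}$ keeping $\idty$ as first element — simultaneously putting all $J_l$, and hence all $N_\sigma$, into strictly upper triangular $(d_\mathcal{M}^2-1)$-dimensional form, which is precisely the lower-right block of \eqref{eq:submatrices}. The remaining block data is then fixed: the first row by trace preservation, and the lower-left column by an associativity argument, namely that evaluating the $n$-fold product in \eqref{eq:conc} with two different bracketings both gives the CDC pins down $w_\sigma$ (one finds it $\sigma$-independent, matching the vector $v$ in \eqref{eq:submatrices}). This establishes that in appropriate bases the memory branch has the form \eqref{eq:submatrices}.

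Finally, with the $J_l$ strictly upper triangular of size $m:=d_\mathcal{M}^2-1$, any product of $m$ of the $N_\sigma$ is automatically zero, so by the computation of \eqref{eq:conc} the memory branch of the $m$-fold concatenation equals the CDC for every separable, and hence (by the reduction given before the theorem) for every, system state. Therefore the memory depth is at most $d_\mathcal{M}^2-1$, mirroring Theorem~\ref{thm:bound}. I expect the passage from the vanishing of all $n$-fold products of the affine family to an actual simultaneous triangularization to be the main obstacle: the coefficient-extraction step must be handled with care because the coordinates $\alpha_l(\sigma)$ are constrained (with $\alpha_1$ fixed), and the triangularization itself relies on Levitzki's theorem for a whole semigroup rather than on any single matrix being nilpotent.
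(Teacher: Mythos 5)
Your proposal follows essentially the same route as the paper's own proof. You extract from \eqref{eq:conc} the condition that all block products $N_{\sigma_n}\cdots N_{\sigma_1}$ vanish, use the fact that the coordinates $\alpha_l(\sigma_i)$ of states fill a full-dimensional set (with independent free variables at each tensor position) to conclude by coefficient extraction that every ordered length-$n$ product of the slope matrices vanishes, and then invoke a simultaneous-triangularization theorem. The paper does exactly this: it phrases the conclusion as nilpotency of the algebra generated by the matrices $\widehat X_{1,l}$ and cites Jacobson's theorem where you cite Levitzki's; the two routes are interchangeable here, since a family all of whose length-$n$ ordered products vanish generates both a nilpotent algebra and a semigroup of nilpotent matrices. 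Your worry about the constraint $\alpha_1$ fixed is resolved the same way the paper resolves it: the constant term $\tfrac1d\widehat X_{1,1}$ simply means that products in which $\widehat X_{1,1}$ occupies some slots also appear as monomial coefficients, so those vanish too.

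One caveat: your side claim that the lower-left columns $w_\sigma$ come out $\sigma$-independent (``by an associativity argument'') is false, and cannot be repaired. Strict forgetfulness only requires the memory output to be independent of the \emph{memory} input, not of the system input. For instance, with $d_\mathcal{A}=d_\mathcal{M}$ the channel $T(\rho_{\mathcal{M}}\otimes\sigma)=\ketbra{0}{0}_{\mathcal{B}}\otimes\sigma$ (discard the memory, emit a fixed output, load the system state into the memory) is strictly forgetful with depth $1$, yet its memory branch $T_{\mathcal{M},\sigma}(\rho_\mathcal{M})=\tr(\rho_\mathcal{M})\,\sigma$ has $w_\sigma$ depending on $\sigma$, so the literal form \eqref{eq:submatrices} with the factor $v\delta_{1,l}$ is not attainable for this channel. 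This gap is immaterial for the memory-depth bound, which rests only on the strict upper triangularity of the $J_l$; note that the paper's own proof is likewise silent on the lower-left block and establishes only the simultaneous triangularization, so on this point your attempt and the published argument share the same (harmless, but real) imprecision relative to the theorem's literal wording.
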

\begin{proof}
We choose again bases $\{A_i\},\{B_i\}$ and $\{M_i\}$ of $\mathcal{A},\mathcal{B}$ and $\mathcal{M}$ such that the identity is the first element of the respective basis and the other elements of the basis are hermitian and tracefree. We adopt the notation of \eqref{eq:memchan} and denote the matrices with elements $\bra{k,1}{D_{T}}\ket{l,r}$ by $X_{1,r}$. Let $\widehat X_{1,r}$ denote the matrix obtained from $X_{1,r}$ by deleting first row and column. The first step in our proof is to verify that the memory channel $T$ is strictly forgetful with memory depth at most $n$ iff the matrix algebra generated by the matrices $\widehat X_{1,l}$ is nilpotent. Since we have chosen the basis $\{A_i\}$ hermitian, tracefree and $A_1=\idty$, there are positive numbers $r_l$ such that
\begin{equation}
\sigma=\frac{1}{d}A_1 +\sum_{l=2}^{d^2_{\mathcal{A}}}a_l A_l
\end{equation}
is a quantum state for all $\vert a_l\vert\leq r_l$. Thus, the condition that $T_{\mathcal{M},\sigma_n}\circ \ldots \circ T_{\mathcal{M},\sigma_1}$ is completely depolarizing for all $\sigma_1,\ldots,\sigma_n$ implies that
\begin{equation}
\sum_{l_1,\ldots ,l_n}a_{l_1}\ldots a_{l_n}\widehat X_{1,l_1}\cdot\ldots\cdot \widehat X_{1,l_n} =0\, ,
\end{equation}
where the $a_{l_i}$ equal $1/d$ if $l_i=0$ and $\vert a_{l_i}\vert \leq r_{l_i}$ otherwise. If we consider this as a polynomial in the variables $a_{l_i}$ with matrix-valued coefficients we see that this equation implies that all coefficients must vanish, that is, $\widehat X_{1,l_1}\cdot\ldots\cdot \widehat X_{1,l_n}=0$ for all $l_i$. This proves that the algebra generated by the matrices $\widehat X_{1,l}$ is nilpotent.

By the theorem of Jacobson \cite{jacobson} this already implies that this algebra is simultaneously triangularizable, that is, there is a basis in which all matrices are upper triangular. The statement about the maximal memory depth of $T$ follows trivially.
\end{proof}
The crucial point in the proof of theorem \ref{thm:memchan} is to show that the algebra generated by the $\widehat X_{1,l}$ is nilpotent. If were only able to prove that the subspace generated by the $\widehat X_{1,l}$ is nilpotent, which translates into the property that the memory channel $T$ is strictly forgetful for all system states of the form $\sigma_{sys}=\sigma^{\otimes n}$, we could not conclude that $T$ is strictly forgetful. In fact, there exist examples \cite{mathes} of nilpotent subspaces of matrices which are not simultaneously upper triangular. From such an example it is easy to construct a memory channel which is strictly forgetful for all $\sigma_{sys}=\sigma^{\otimes n}$ but not for general system states. Indeed, let all systems $\mathcal{A},\mathcal{B}$ and $\mathcal{M}$ be qubits and choose Pauli matrices as operator basis. Consider the following matrices
\begin{equation}
\widehat X_{1,2}=
\left(
\begin{array}{ccc}
 0 & 0  & 0   \\
 -a & 0  & 0   \\
 0 & a  & 0
\end{array}
\right)
\quad
 \widehat X_{1,3}=
\left(
\begin{array}{ccc}
 0 & b  & 0   \\
 0 & 0  & b   \\
 0 & 0  & 0
\end{array}
\right)\quad
 \widehat X_{1,1}= \widehat X_{1,4}=0
\end{equation}
and assume all other matrix elements of $D_T$ to be zero, except $\bra{1}X_{1,1}\ket{1}=1$ which represents the trace-preserving property of $T$. Again, for small enough $a$ and $b$ this is completely positive. For a state $\sigma \in \S(\mathcal{A})$ we get the memory channel
\begin{equation}
D_{T_{\mathcal{M},\sigma}}=
\left(
\begin{array}{cccc}
 1 & 0  & 0  & 0 \\
 0 & 0  & b\alpha_y(\sigma)   & 0 \\
 0 &  -a\alpha_x(\sigma) & 0  & b\alpha_y(\sigma) \\
 0 &  0 & a\alpha_x(\sigma)   & 0
\end{array}
\right)\, ,
\end{equation}
where $\alpha_x(\sigma)$ respectively $\alpha_y(\sigma)$ denote the coefficients of $\sigma$ with respect to Pauli operator $x$ respectively $y$. Obviously, $D_{T_{\mathcal{M},\sigma}}$ is a root of the bistochastic CDC of order at most three, but the family of all $D_{T_{\mathcal{M},\sigma}}$ is not simultaneously upper triangular. This is expressed by the fact that the sequence
\begin{equation}
\left(D_{T_{\mathcal{M},\psi_x}} D_{T_{\mathcal{M},\psi_y}}\right) ^n= \left(
\begin{array}{cccc}
 1 & 0  & 0  & 0 \\
 0 & 0  & 0  & 0 \\
 0 &  0 & (-ab)^n  & 0 \\
 0 &  0 & 0  & (ab)^n
\end{array}
\right)\, ,
\end{equation}
where $\psi_x$ respectively $\psi_y$ denote the eigenstates of Pauli operators $x$ respectively $y$ with eigenvalue $+1$, never exactly equals the CDC, although it converges exponentially in $n$ towards the CDC.

\section*{Discussion}

Our construction implies that the maximal memory depth of the channel just depends on the dimension of the memory system. Under further assumptions the bound can sometimes be improved. For example, if the memory channel is assumed to be reversible, and hence given by a unitary operator, the memory branch is a homomorphism. Reversible qubit channels have been discussed in \cite{rybar}, and the maximal memory depth was shown to be $2<3=2^2-1$. More generally, one can see that the nesting of linear subspaces implicit in the Jordan decomposition has to be replaced in the reversible case by a nesting of subalgebras. Since for these some dimensions are forbidden, one gets a tighter bound, namely depth $<2(d-1)$ \cite{GRW}.

\section*{Acknowledgments}
We gratefully acknowledge financial support by the EU (projects CORNER and COQUIT) and stimulating conversations with David Gross and Thomas Salfeld.

\bibliographystyle{alpha}
\bibliography{channelBib}

\end{document}